\documentclass[%
twocolumn]{revtex4-2}

\usepackage{amsmath, amssymb, amsfonts}
\DeclareMathOperator{\Tr}{Tr}
\usepackage{bm}
\usepackage{bbm} 
\usepackage{amsthm}
\theoremstyle{plain}
\newtheorem{lemma}{Lemma}[section]
\newtheorem{claim}[lemma]{Claim}

\theoremstyle{definition}

\theoremstyle{remark}

\usepackage{graphicx}
\usepackage{booktabs}
\usepackage{multirow}
\usepackage{caption}
\usepackage{subcaption}
\usepackage{float}

\usepackage{enumitem}
\usepackage{xcolor}
\usepackage{siunitx}
\usepackage{hyperref}
\hypersetup{
    colorlinks=true,
    linkcolor=blue,
    citecolor=blue,
    urlcolor=blue
}

\usepackage{comment}

\usepackage{xurl}

\begin{document}

\title{On multipoles, their decomposition by time-reversal symmetry,
and the electric toroidal monopole}

\author{Vinzenz A. Müller}
\affiliation{Materials Theory, ETH Z{\"u}rich, 8093 Zurich, Switzerland}
\author{Nora Taufertshöfer}
\affiliation{Materials Theory, ETH Z{\"u}rich, 8093 Zurich, Switzerland}
\author{Nicola A. Spaldin}
\affiliation{Materials Theory, ETH Z{\"u}rich, 8093 Zurich, Switzerland}


\begin{abstract}
The multipole decomposition of the single-site density matrix provides a symmetry-adapted representation of local electronic degrees of freedom. Conventional, so-called fixed-shell, formulations do not span the full local single-particle operator space, as only operators mapping within the same orbital $l$ are resolved. Here we construct a complete orthogonal basis of real Hermitian multipole operators for the local density matrix by extending the existing formulation to inter-shell operators. 
We revisit the multipole decomposition as a decomposition of the operator space by $\mathrm{SO}(3)$ by first decomposing the orbital and spin operator spaces. Then by coupling them we arrive at the spin-$\frac12$ local single-particle operator space, staying consistent with the existing fixed-shell formulations.
We then classify the multipoles by parity and time-reversal symmetry, which allows for unique identification of multipole moments of the density matrix that contribute to expectation values of fully symmetry-resolved observables.
As an application, we analyze the two enantiomers of chiral trigonal tellurium by computing the electric toroidal monopole moment selected by symmetry.
\end{abstract}

\maketitle

\section{Introduction}
\label{sec:introduction}

Multipole representations provide a convenient symmetry-resolved language for local electronic degrees of freedom. 
By expanding local operators in irreducible spherical tensors, charge, magnetic, orbital, and spin-orbital degrees of freedom 
can be classified according to their transformation properties under rotations, spatial inversion, and time-reversal. 
This idea is rooted in angular-momentum theory and the theory of irreducible tensor operators 
\cite{Racah1942,Wigner1959,Edmonds1957,BrinkSatchler1968,varshalovich1988}. 
Although a crystalline environment reduces the continuous rotational symmetry of an isolated atom to a finite point group, 
the spherical multipole basis remains a useful parent classification: the point-group-allowed local degrees of freedom are obtained 
by further decomposing the $\mathrm{SO}(3)$ tensors into irreducible representations of the site symmetry.\\
This approach is particularly useful in first-principles density functional calculations, where the local electronic state is naturally 
represented by a projected single-particle density matrix. 
A multipole expansion of this density matrix converts its matrix elements into symmetry-labeled amplitudes. 
Such decompositions have been used to analyze orbital and spin-orbital polarization channels 
\cite{vanderLaan1995,Bultmark2009,Cricchio2009}, 
and first-principles approaches have been developed for magnetic, toroidal, and magnetoelectric multipoles in both 
insulating and metallic systems 
\cite{Ederer2007,Spaldin2008,Spaldin2013,TholeFechnerSpaldin2016,Thole2018,BhowalSpaldin2021,UrruSpaldin2022,Urru2023,VerbeekUrruSpaldin2023}. 
More recent multipole formulations classify local electronic degrees of freedom into 
electric, magnetic, magnetic-toroidal, and electric-toroidal sectors according to their spatial-inversion and time-reversal parities 
\cite{Winkler, Hayami2018,Hayami2018PRB,Kusunose2020,Yatsushiro2021,Hayami2024PSJ}, many of which are not included in a fixed-shell multipole formulation.\\
In this work we revisit the decomposition by $\mathrm{SO}(3)$ of the local single-particle density matrix
and construct a complete orthogonal basis of real Hermitian multipoles for the local operator space.
We further classify these multipoles by spatial parity and time-reversal symmetry,
thereby organizing the full density-matrix information into symmetry-distinct channels.
Our construction closely follows the standard angular-momentum coupling scheme,
as done in Refs. \cite{vanderLaan1995} and \cite{Bultmark2009} for fixed-shell multipoles to which it reduces for $l'=l$,
and extends it to include missing inter-shell operators yielding a complete set of multipoles spanning the operator space.
This extended classification exposes inter-orbital multipoles that are not present in a fixed-shell analysis. 
For example, in the odd-parity, time-reversal-even sector of this extended classification, the lowest-rank pseudoscalar is the electric toroidal monopole. 
Because it is invariant under proper rotations but changes sign under inversion and mirror operations, 
the electric toroidal monopole has the symmetry of a chiral scalar. 
It has therefore been proposed as a microscopic measure of chirality in quantum systems 
and as a symmetry-adapted descriptor of handed electronic structures \cite{Inda2024,Kusunose_chirality_APL_2024,Oiwa2025PRR,Spaldin2026}.
\\
As an illustrative application, we consider the two enantiomers of trigonal tellurium, a paradigmatic elemental chiral crystal exhibiting chirality-dependent spin textures 
and current-induced magnetoelectric responses \cite{Furukawa2017,Sakano2020,Tsirkin2018,Gatti2020,Calavalle2022,Roy2022, Oiwa2025PRR}. 
We demonstrate how a complete multipole decomposition of the local density matrix of tellurium gives access 
to physically interpretable symmetry components, 
with particular emphasis on the inter-shell contributions that encode the electric toroidal monopole.\\
The paper is organized as follows. 
In Sec.~\ref{sec:basis}, we construct the multipole basis for the local single-particle density matrix. 
We first decompose the orbital and spin operator spaces separately and then couple them to obtain spin-orbital multipoles. 
We then derive the real Hermitian basis appropriate for both intra-shell and inter-shell blocks. 
In Sec.~\ref{sec:decomp_parity_time}, we classify the multipoles according to parity and time-reversal symmetry. 
In Sec.~\ref{sec:casestudy}, we apply the formalism to the two enantiomers of tellurium, 
focusing on the electric toroidal monopole as a local pseudoscalar measure of electronic chirality. 
Finally, we summarize the main results and discuss how the construction can be used to analyze symmetry-resolved density matrices in first-principles calculations.

\section{Multipoles as a Basis}
\label{sec:basis}

\noindent 
We consider the single-particle Hilbert space \(\tilde{\mathcal{H}}\) of an electron localized on a site. 
Its angular degrees of freedom are conveniently described by orbital components,
that arise from the decomposition of the angular dependence of wavefunctions 
into irreducible representations of $\mathrm{SO}(3)$. This angular Hilbert space, naturally represented by the space of square integrable functions on the sphere $\mathrm{S}^2$, i.e.
$L^2(\mathrm{S}^2)$, decomposes as
\begin{equation}
L^2(\mathrm{S}^2) = \bigoplus_{l=0}^\infty \Phi_l,
\end{equation}
where $\Phi_l$ is the $(2l+1)$-dimensional irreducible representation spanned by spherical harmonics 
$Y_l^m(\theta,\phi)$ with $m=-l,\dots,l$ and $\bigoplus$ represents the direct sum. These functions provide a position-space realization ($\langle \theta,\phi \,|\, l,m \rangle = Y_l^m(\theta,\phi)$)
of the orbital states $|l,m\rangle$, which for a fixed angular momentum $l$ constitute a vector space $V_l$.  \\

In the full wavefunction, the angular dependence is accompanied by a radial part $R$, which is determined by the site's central potential (or pseudopotential in many practical calculations).
The spin-less single-particle wavefunctions take the form $R_{nl}(r) Y_l^m(\theta,\phi)$, where $n$ is the principal/radial quantum number.

The spin degrees of freedom span a space $V_s$ of dimension $2s+1$. 
The single-particle spin-orbital Hilbert space can thus be written as
\begin{equation}
\tilde{\mathcal{H}} = \left( \bigoplus_{n,l} V_{n,l} \right) \otimes V_s,
\end{equation}
with basis states $|n,l,m\rangle \otimes |s,m_s\rangle$.\\

In the present context, we are interested in the single-site single-particle density matrix we obtain from density functional theory (DFT) calculations,
by projection  onto a chosen finite-dimensional
local orbital subspace \(\mathcal{H}\). The density matrix is a Hermitian, positive semidefinite operator in the space of endomorphisms on $\mathcal{H}$, \(\mathrm{End}(\mathcal{H})\), whose trace gives the occupation of the site \cite{Blochl1994PAW,KresseJoubert1999PAW}.
\\

The restriction of the full single-particle Hilbert space \(\tilde{\mathcal{H}}\) to a finite dimensional subspace \(\mathcal{H}\) occurs due to the consideration of only a finite set of orbital quantum numbers and the use of only one radial projector per angular-momentum shell. 
With \(\mathcal{H}\) finite-dimensional, \(\mathrm{End}(\mathcal{H})\) is isomorphic to the tensor-product space
\begin{equation}
\mathrm{End}(\mathcal{H}) \cong \mathcal{H}\otimes \mathcal{H}^*,
\end{equation}
with the map \(
|u\rangle \langle v| \mapsto |u\rangle \otimes \langle v|,
\)
for \(|u\rangle,|v\rangle \in \mathcal{H}\), and identifying the dual of \(|v\rangle\) with the bra \(\langle v|\in \mathcal{H}^*\) via the inner product on \(\mathcal{H}\).
Using the canonical associativity and permutation isomorphisms of tensor products, 
the operator space can be rearranged as
\begin{equation}
\mathrm{End}(\mathcal{H}) 
\cong 
 \left( \bigoplus_{n,l,n',l'} 
 \left( V_{n,l} \otimes V_{n',l'}^*\right) \right)
\otimes 
\left( V_s \otimes V_s^* \right),
\label{eq:reshuffled_operator_space}
\end{equation}
where orbital and spin degrees of freedom are separated.\\

$\mathcal{H}\otimes \mathcal{H}^*$ can be understood as a representation of $\mathrm{SO}(3)$, and thus admits a decomposition into irreducible subspaces, as do the orbital operator spaces $V_{n,l} \otimes V_{n',l'}^*$ and the spin operator space $V_s \otimes V_s^*$.
While the radial functions $R_{nl}(r)$ depend on the details of the local potential, 
they transform trivially under rotations. As a result, the decomposition into 
irreducible representations is entirely governed by the angular part. 
Going forward, we hence omit the quantum number $n$ and use $l$ and $l'$ in place of $(n,l)$ and $(n',l')$. The radial dependence would enter the formalism only as an additional scalar factor.\\
The decomposition of the operator space as in eq.~\eqref{eq:reshuffled_operator_space}  provides a structure for separately decomposing the orbital and spin parts  into irreducible subspaces before forming their product according to the addition of angular momenta, respecting the structure of irreducible subspaces, called coupling of orbital and spin operators.

For a given $l$ and $l'$, we obtain the irreducible subspaces $W_i$ of 
$ \left( V_l \otimes V_{l'}^* \right)
\otimes 
\left( V_s \otimes V_s^* \right)$.
Each such subspace is given by a so-called multipole tensor. More precisely,
the irreducible subspace $W_i$ is spanned by the components of the multipole tensor.

However, we are not interested in the complete $\mathbb{C}$-linear space, but only in the $\mathbb{R}$-linear Hermitian subspace. 
The goal will therefore be to find a basis of $W_i$ consisting of Hermitian operators in which the density matrix can be expanded, see appendix \ref{app:multipole_decomp}.
As discussed in subsections \ref{subsec:decomp_orbital_tensor} and \ref{subsec:real}, to construct a Hermitian operator, we are required to consider the direct sum of the orbital spaces $\left( V_l \otimes V_{l'}^* \oplus V_{l'} \otimes V_{l}^*\right)$, noting that
\begin{equation}
\begin{aligned}
    \bigoplus_{l',l}V_l \otimes V_{l'}^*\cong & \left(\bigoplus_{l'< l} \left(V_l \otimes V_{l'}^*\right)\oplus \left(V_{l'} \otimes V_{l}^*\right)\right)\\
    &\oplus \left(\bigoplus_l V_l \otimes V_{l}^*\right).
\end{aligned}
\end{equation}
Hence, we will find the Hermitian operators in $\left( (V_l \otimes V_{l'}^*)\otimes (V_s \otimes V_s^*) \oplus (V_{l'} \otimes V_{l}^*)\otimes (V_s \otimes V_s^*)\right)$.

\subsection{Decomposition of the orbital operator space  by \texorpdfstring{$\mathrm{SO}(3)$}{SO(3)}}
\label{subsec:decomp_orbital_tensor}
Operators $T$ mapping from $V_{l'}$ to $V_l$ inherit a natural action of rotations from the 
transformation properties of states. Under a rotation $R \in \mathrm{SO}(3)$, 
the states transform as
\begin{equation}
|l,m\rangle \;\mapsto\; \sum_{m'} D^{(l)}_{m'm}(R)\,|l,m'\rangle,
\end{equation}
where $D^{(l)}_{m'm}(R)$ denotes the Wigner-D matrix, a unitary square matrix as expressed in the orbital basis \cite[Ch.~3.5]{sakurai} \cite[Ch.~4.1]{Edmonds1957}.
Requiring that the physics is the same in all rotated frames, i.e.
 $\langle \psi_R | T_R | \phi_R \rangle = \langle \psi | T | \phi \rangle$,
enforces the transformation law
\begin{equation}
T \;\mapsto\; D^{(l)}(R)\,T\,D^{(l')}(R)^{-1}.
\end{equation}
for the operator \cite[Ch.~5.2]{Edmonds1957}.
Thus, the space of operators $T : V_{l'} \to V_l$ constitutes a representation 
of $\mathrm{SO}(3)$ itself, and by the Clebsch-Gordan decomposition, 
this representation splits into irreducible components,
\begin{equation}
\mathrm{Hom}(V_{l'},V_l) \cong \bigoplus_{k=|l-l'|}^{l+l'} V_k.
\end{equation}
Accordingly, $\mathrm{Hom}(V_{l'},V_l)$ admits a basis consisting of irreducible spherical rank-$k$ tensor operators \(C^k(l',l)\), or rather their components
\begin{equation}
    C^k_q(l',l) : V_{l'}\to V_l, \qquad q=-k,\ldots,k,
    \label{eq:spherical_tensors}
\end{equation}
where the labels $(l',l)$ indicate that $C^k_q(l',l)$ maps $V_{l'}$ to $V_l$, distinguishing tensors of the same rank $k$ acting between different orbital sectors.
The tensor components transform under rotations as
\begin{equation}
    D^{(l)}(R)\,C^k_q(l',l)\,D^{(l')}(R)^{-1} 
    = \sum_{q'=-k}^{k} D^{(k)}_{q'q}(R)\,C^k_{q'}(l',l),
\end{equation}
i.e. each subspace $V_k\in \mathrm{Hom}(V_{l'},V_l)$, which is spanned by the components of the rank-$k$ spherical tensor operator, is invariant under rotations, and the components transform among themselves in the same way as the spherical harmonics \cite[Eq.~5.2.1]{Edmonds1957}.

The Wigner–Eckart theorem fixes the dependence of the matrix element of an orbital operator on
$m$, $m'$, and $q$, up to a convention-dependent reduced matrix element.

Using the Condon-Shortley phase convention for the orbital states, the Clebsch-Gordan coefficients and Wigner-3j symbols are real and related as \cite[Eq.~3.7.3]{Edmonds1957}
\begin{equation}
\langle l'm';kq|lm\rangle
=
(-1)^{-l'+k-m}\sqrt{2l+1}
\begin{pmatrix}
l' & k & l\\
m' & q & -m
\end{pmatrix}.
\end{equation}

The Wigner-Eckart theorem then relates the matrix elements to the Clebsch-Gordan coefficients as
\begin{equation}
\langle lm |C^k_q(l',l)|l'm'\rangle
=
\langle l'm';kq|lm\rangle
\langle l\Vert C^k(l',l)\Vert l'\rangle_{\rm CG},
\end{equation}
or equivalently with Wigner-3j symbols \cite[Eq.~5.4.1]{Edmonds1957} as
\begin{equation}
\begin{aligned}
&\langle lm |C^k_q(l',l)|l'm'\rangle \\
&= 
 (-1)^{l-m}
\begin{pmatrix}
l & k & l'\\
-m & q & m'
\end{pmatrix}
\langle l\Vert C^k(l',l)\Vert l'\rangle_{3j}.
\label{eq:wigner_eckart}
\end{aligned}
\end{equation}

The two reduced matrix elements $\langle \cdot || \cdot || \cdot \rangle \in \mathbb{C}$ denoted by the subscripts $\mathrm{CG}$ and $3j$ differ only by a convention-dependent phase and normalization factor.
The latter form, based on Wigner-3j symbols, is the one used in the formulation by van der Laan \emph{et al.} \cite[Eq.~A4]{vanderLaan1995} and Bultmark \emph{et al.} \cite[Eq.~21]{Bultmark2009}, which is the fixed-shell formulation that we are extending.

Note the difference in ordering of the elements between the Clebsch-Gordan and Wigner-3j versions of the Wigner-Eckart theorem. This stems from the different physical origins of the expressions. While the matrix elements, as well as the Wigner-3j symbols read $\langle$Final$|$Operator$|$Initial$\rangle$, the Clebsch-Gordan coefficients read $\langle$Initial;Operator$|$Final$\rangle$.

We are still left with a choice of phase and normalization for the reduced
matrix elements, for which several conventions are used in the literature.
A particularly simple choice with regard to purely algebraic manipulations would be
\begin{equation}
\langle l\Vert C^k(l',l)\Vert l'\rangle_{3j} = 1,
\end{equation}
as used in Racah's unit tensor \cite{Racah1942,Duros_2025}.
One could also identify the tensor with a spherical harmonic via $C^k_q(l',l)(\theta, \phi)=\sqrt{4\pi/(2k+1)} Y_k^q(\theta, \phi)$
and compute its components via Gaunt integrals 
\cite[Eq.~4.6.3]{Edmonds1957}:
\begin{equation}
\begin{aligned}
&\langle lm |C^k_q(l',l)|l'm'\rangle\\
&=\sqrt{4\pi/(2k+1)}\int_{S^2} Y_l^m{}^* Y_k^q Y_{l'}^{m'}d\Omega\\
&=\sqrt{4\pi/(2k+1)}(-1)^m\int_{S^2} Y_l^{-m} Y_k^q Y_{l'}^{m'}d\Omega\\
&=(-1)^m\sqrt{2l+1}\sqrt{2l'+1}
\begin{pmatrix}
l & k & l'\\
0 & 0 & 0
\end{pmatrix}
\begin{pmatrix}
l & k & l'\\
-m & q & m'
\end{pmatrix}.
\end{aligned} 
\end{equation}
Hence the reduced matrix element would take the form
\begin{equation}
\langle l\Vert C^k(l',l)\Vert l'\rangle_{3j}
=(-1)^{l+2m}\sqrt{2l+1}\sqrt{2l'+1}
\begin{pmatrix}
l & k & l'\\
0 & 0 & 0
\end{pmatrix}.
\end{equation}
Since the Wigner-3j symbol
\(
\begin{pmatrix}
l & k & l'\\
0 & 0 & 0
\end{pmatrix}
\)
vanishes unless $l+k+l'$ is even, so would all orbital operators.

van der Laan \emph{et al.} \cite[Eq.~A4]{vanderLaan1995} and Bultmark \emph{et al.} \cite[Eq.~21]{Bultmark2009} instead use the normalization
\begin{equation}
\langle l\Vert C^k(l,l)\Vert l\rangle_{3j}
=
n_{lk}^{-1},
\end{equation}
with
\begin{equation}
n_{lk}=\frac{(2l)!}{\sqrt{(2l-k)!(2l+k+1)!}}
\label{eq:n_lk}
\end{equation}
for $l'=l$, which ensures that tensor operators of all ranks $k$ are retained.
This convention yields a complete set of irreducible tensor operators within a fixed $l$-shell and is particularly convenient when expressing the density matrix in terms of multipole moments, as it achieves
\begin{equation}
\langle l l|C^k_0|ll\rangle=1.
\end{equation}
That is, the $q=0$ component is normalized to unity on the stretched
orbital state $m=l$. This follows from
\begin{equation}
n_{lk}=
\begin{pmatrix}
    l & k & l \\
    -l & 0 & l
\end{pmatrix}
\end{equation}
as follows directly from the closed form of the Wigner-3j symbol; see also \cite[Footnote~12]{Edmonds1957}.

We can adapt this reduced matrix element to general $l$ and $l'$:
\begin{equation}
\langle l\Vert C^k(l',l)\Vert l'\rangle_{3j}
= n_{lkl'}^{-1},
\end{equation}
with
\begin{equation}
\begin{aligned}
n_{lkl'} &= \sqrt{\frac{(2l)!(2l')!}{(l+l'-k)!(l+l'+k+1)!}}\\
&=
\begin{pmatrix}
    l & k & l' \\
    -l & l-l' & l'
\end{pmatrix},
\end{aligned}
\label{eq:orbital_normalization}
\end{equation}
resulting in a normalization for the stretched matrix element
\begin{equation}
    \langle l,l|C^k_{l-l'}(l',l)|l',l'\rangle=1.
\end{equation}

With this normalization the orbital spherical tensor operator becomes
\begin{equation}
    C^k_q(l',l) = \sum_{m,m'}n_{lkl'}^{-1}(-1)^{l-m}
\begin{pmatrix}
l & k & l'\\
-m & q & m'
\end{pmatrix} |lm\rangle \langle l'm'|.
\label{eq:orbital_tensor_operator}
\end{equation}
Hermitian conjugation of the orbital tensor components gives
\begin{equation}
    \left( C^k_q(l',l)\right)^\dagger = (-1)^{l-l'-q}C^k_{-q}(l,l'),
    \label{eq:hermconj_orbitaltensor}
\end{equation}
as shown in the appendix \ref{app:hermconj_orbitaltensor}.
The Hermitian adjoint lives in the opposite $(l,l')$ block, since with $C^k_q(l',l) \in \mathrm{Hom}(V_{l'},V_l)$, we have $\left( C^k_q(l',l)\right)^\dagger\in\mathrm{Hom}(V_l, V_{l'})$.

Orthogonality of the spherical tensor components with respect to the Hilbert-Schmidt inner product can be shown based on representation theory using Schur's lemma or by direct calculation as done in the appendix \ref{app:orthogonality_orbital}.

The components of the orbital tensor operator defined by eq. \eqref{eq:orbital_tensor_operator} thus form a complete orthogonal basis set of the operator space $\mathrm{Hom}(V_{l'},V_l)\cong V_l \otimes V_{l'}^*$ over \(\mathbb{C}\), and reduce to the operators as defined in \cite{vanderLaan1995} for $l'=l$.
Note, one could add the factor of $i^{l+l'}$, symmetric in $l'$ and $l$, to remove the $(l',l)$-dependence in the sign for the Hermitian adjoint. But in doing so, the $C^k_0$-component, i.e. $z$-component, of the stretched state would pick up an $l$-dependent sign
\begin{equation}
\langle l l|C^k_0|ll\rangle=i^{-2l}=(-1)^l.
\end{equation}
An anti-symmetric factor like $i^{l-l'}$ would recover the expectation value of 1 again, but not affect the Hermitian conjugation relation.
Hence we choose to keep the normalization constant real and carry on the $(l',l)$-dependence in sign of the Hermitian conjugation relation.

\subsection{Decomposition of the spin operator space by \texorpdfstring{$\mathrm{SO}(3)$}{SO(3)}}
\label{sec:decomp_spin}

The Hermitian operators in the spin space $V_s \otimes V_s^*$, with \(s=\frac12\), are spanned by the identity 
and the Pauli matrices. The spin-operator space decomposes as
\begin{equation}
V_s\otimes V_s^* \simeq \frac12\otimes \frac12 = 0\oplus 1 .
\end{equation}
Hence the four spin operators may be organized into irreducible spherical tensors \(\sigma^p_y\), with \(p=0,1\):

\begin{equation}
    \sigma^p_y = \sum_{m_s,m_s'}n_p^{-1}(-1)^{s-m_s}
    \begin{pmatrix}
        s & p & s\\
        -m_s & y & m_s'
    \end{pmatrix}
    |sm_s\rangle \langle s m_s'|
    \label{eq:spherical_spin_operator}
\end{equation}
with 
\begin{equation}
n_p = \frac{1}{\sqrt{(p+2)!}}.
\end{equation}
The normalization is chosen, such that
\begin{equation}
\begin{aligned}
    \sigma^0_0 &= \mathbb{I},\\
    \sigma^1_0 &= \sigma_z,\\
    \sigma^1_{\pm 1} &= \mp \frac{1}{\sqrt{2}}(\sigma_x \pm i\sigma_y)
    \label{eq:transform_pauli_to_spherical}
\end{aligned}
\end{equation}
as shown in the appendix \ref{app:derivation_spinopangular}.
As can be derived from eq. \eqref{eq:transform_pauli_to_spherical}, Hermitian conjugation of the spherical tensor components gives
\begin{equation}
(\sigma^{p}_y)^\dagger = (-1)^y \sigma^p_{-y}.
\label{eq:hermconj_spintensor}
\end{equation}
As the transformation to spherical tensors as defined in eq. \eqref{eq:transform_pauli_to_spherical} is unitary, the Hilbert-Schmidt inner product is preserved:
\begin{equation}
\Tr[(\sigma^p_y)^\dagger \sigma^{p'}_{y'}]=2\delta_{pp'}\delta_{yy'},
\label{eq:orthogonality_spin}
\end{equation}
and the spin tensor components defined in eq. \eqref{eq:spherical_spin_operator} thus form an orthogonal basis for the spin-\(\frac12\) operator space \(\mathrm{End}(V_s)\cong V_s\otimes V_s^*\) over the complex numbers.

\subsection{Coupling of orbital and spin operators to multipoles}
We now complete the construction of the multipole basis of the operator space given in eq.~\eqref{eq:reshuffled_operator_space} by considering the tensor product between the orbital and spin operators. Analogously to the discussion for the orbital tensors, the products $C^k_q(l',l)\otimes \sigma^p_y$ are themselves a representation of $\mathrm{SO}(3)$ and can thus be decomposed into irreducible components, i.e. the multipoles
\begin{equation}
\begin{aligned}
T^r_t &= [C^k_q(l',l) \otimes \sigma^p_y]^r_t\\
&=\sum_{q,y} n_{\mathrm{CG}}^{-1}\langle k q ; p y | r t \rangle  C^k_q(l',l) \otimes \sigma^p_y\\
&=\sum_{q,y} n_{3j}^{-1}(-1)^{r-t}
\begin{pmatrix}
    r & p & k \\
    -t & y & q
\end{pmatrix}
 C^k_q(l',l) \otimes \sigma^p_y
\end{aligned}
\end{equation}
with $r = |k-p|,\dots,k+p$. 
This formation of the tensor product is often referred to as coupling of the orbital and spin operators.
More naturally, we can rewrite the operator as:
\begin{equation}
\begin{aligned}
T^r_t=&n_{3j}^{-1} (-1)^{k+p}\\
& \times \sum_{q,y} (-1)^{-q-y}
\begin{pmatrix}
    k & r & p \\
    -q & t & -y
\end{pmatrix}
 C^k_q(l',l) \otimes \sigma^p_y,
 \label{eq:multipole_realnorm}
\end{aligned}
\end{equation}
using the symmetry properties of the Wigner-3j symbols and $(-1)^{2r}=1$ for $r$ an integer, as well as conservation of quantum number $t=q+y$. Eq. \eqref{eq:multipole_realnorm} is the form used by van der Laan \emph{et al.} \cite[Eq.~A1]{vanderLaan1995} and Bultmark \emph{et al.} \cite[Eq.~26]{Bultmark2009}, but here our orbital tensor operator is for general $l$ and $l'$.
As here we are coupling two tensor operators, rather than normalizing expectation values for certain states as was the case for the orbital tensor operator, choosing an analogous normalization to eq.~\eqref{eq:orbital_normalization} like \(\begin{pmatrix}
    k & r & p\\
    -k & k-p & p
\end{pmatrix}\),
would not have a clear interpretation.
Omitting any normalization would correspond to treating the coupling as an addition of angular momenta, which is normalized for the maximal momentum, i.e. $T^r_{q+p} = C^k_k(l',l)\otimes\sigma^p_p$. 
For the coupled tensor, we instead choose the normalization such that the $q=0,y=0$ term in $T^r_0$ appears with unit coefficient, namely as $C^k_0(l',l)\otimes\sigma^p_0$. With this convention, the zero component is tied to the $z$-components of the orbital and spin tensors.
For $k+p+r$ even, we can achieve such a normalization by dividing by the projecting factor
\(\begin{pmatrix}
    k & p & r\\
    0 & 0 & 0
\end{pmatrix}\).

Edmonds \cite[Eq.~3.7.17]{Edmonds1957} derived the following explicit expression:
\begin{equation}
\begin{aligned}
\begin{pmatrix}
    a & b & c\\
    0 & 0 & 0
\end{pmatrix}
= (-1)^{g/2} &\sqrt{\frac{(g-2a)!(g-2b)!(g-2c)!}{(g+1)!}}\\
\times &\frac{(g/2)!}{(g/2-a)!(g/2-b)!(g/2-c)!},
\end{aligned}
\label{eq:3jsymbol_kpr}
\end{equation}
for $g=a+b+c$ even.
One can extend this expression to odd $g$ by allowing for half-integer factorials, which van der Laan \emph{et al.} \cite{vanderLaan1995} achieved with the double factorials $k!!$, i.e. the product of all integer numbers up to $k$ of same parity as $k$. For $k$ even, the relation $(2k)!!=2^k k!$
gives $(k/2)!= 2^{-k/2}k!!$. As $2^{-k/2}k!!$ is non-zero also for $k$ odd, it can be substituted in the right hand side of eq. \eqref{eq:3jsymbol_kpr}
\begin{equation}
\begin{aligned}
(-1)^{g/2} &\sqrt{\frac{(g-2a)!(g-2b)!(g-2c)!}{(g+1)!}}\\
\times &\frac{g!!}{(g-2a)!!(g-2b)!!(g-2c)!!},
\end{aligned}
\end{equation}
which now is non-zero also for $g$ odd and coincides with the Wigner-3j symbol for $g$ even.
The factor $(-1)^{g/2}$ can be replaced by $i^g$. This factor conveniently removes the $k,p,$ and $r$ dependent sign in Hermitian conjugation relations, see appendix \ref{app:hermconj_multipole}, and for $l'=l$ results in the spherical tensor components satisfying eq. \eqref{eq:spherical-hermitian}, see subsection \ref{subsec:real}.
Putting everything together we arrive at the following expression for the multipole operator for general $l$ and $l'$:

\begin{equation}
\begin{aligned}
W^{kpr}_t(l',l)=&n_{kpr}^{-1} (-1)^{k+p}\\
& \times \sum_{q,y} (-1)^{-q-y}
\begin{pmatrix}
    k & r & p \\
    -q & t & -y
\end{pmatrix}
 C^k_q(l',l) \otimes \sigma^p_y,
 \label{eq:multipole_tensor}
\end{aligned}
\end{equation}
with
\begin{equation}
\begin{aligned}
    n_{kpr} = i^g &\sqrt{\frac{(g-2a)!(g-2b)!(g-2c)!}{(g+1)!}}\\
\times &\frac{g!!}{(g-2a)!!(g-2b)!!(g-2c)!!},
\end{aligned}
\end{equation}
with $g=k+p+r$ and $a=k$, $b=p$ and $c=r$.
The expression coincides with the multipole tensor as defined by van der Laan \emph{et al.} \cite{vanderLaan1995} for $l'=l$. The Hermitian adjoint of the components is
\begin{equation}
   \left(W^{kpr}_t(l',l)\right)^\dagger = (-1)^{l-l'+t} W^{kpr}_{-t}(l,l'),
\label{eq:hermconj_multipoletensor} 
\end{equation}
as derived in the appendix \ref{app:hermconj_multipole}

Orthogonality follows from orthogonality of the orbital and spin tensor components; it is computed explicitly in the appendix \ref{app:orthogonality_sphericalmultipole}. Completeness is shown in the appendix \ref{app:orthogonality_completeness}.
The spherical multipole operators thus form a complete orthogonal basis of the whole operator space \(\mathrm{End}(\mathcal{H})\).

\subsection{Real Hermitian operator basis}
\label{subsec:real}

For an operator to represent a physical observable, it has to be Hermitian, and given an operator $O$, Hermitian operators can be constructed by taking the following linear combinations:
\begin{equation}
    \begin{aligned}
        O^+ &= \frac{1}{2}\left(O+O^\dagger\right),\\
        O^- &= \frac{1}{2i}\left(O-O^\dagger\right).
    \end{aligned}
    \label{eq:transform_to_hermitian}
\end{equation}

Another property we would like to achieve concerns how the tensor operator transforms under rotations. The spherical tensor components $T^r_t$ transform under rotation $R\in\mathrm{SO}(3)$ via the Wigner-D matrices \cite[Eq.~5.2.1]{Edmonds1957}:
\begin{equation}
U(R) T^r_t U^\dagger(R)
=
\sum_{t'=-r}^{r}
D^{(r)}_{t't}(R) T^r_{t'} ,
\label{eq:rotation_of_sphericaltensorcomponents}
\end{equation}
and, in particular, for a rotation around the $z$ axis by an angle $\alpha$,
\begin{equation}
U(R_z(\alpha)) T^r_t U^\dagger(R_z(\alpha))
=
e^{-it\alpha} T^r_t ,
\label{eq:z_rotation_spherical}
\end{equation}
up to the sign convention chosen for active or passive rotations \cite[Chp.~4.16,~Eq.~2]{varshalovich1988}. 
One can observe that this spherical basis diagonalizes rotations.
However, we would like the rotation matrices $U(R)$ to be real orthogonal matrices. This can be achieved by taking the appropriate linear combinations of the tensor operator components to bring the tensor operator into its so-called real form \cite[Eq.~8.5,~8.6]{Choi1999}:
\begin{equation}
\begin{aligned}
    &T^{r,\mathrm{real}}_0 = T^r_0, & \text{for }t=0,\\
    &
    \begin{cases}
        T^{r,\mathrm{real,+}}_t &= \frac{1}{\sqrt{2}}\left(T^r_t + (-1)^t T^r_{-t} \right),\\
        T^{r,\mathrm{real,-}}_t &= \frac{-i}{\sqrt{2}}\left(T^r_t -(-1)^tT^r_{-t} \right), \\
    \end{cases}
    & \text{for }t\neq0.\\
\end{aligned}
\label{eq:transform_to_real_paper}
\end{equation}
The factor $(-1)^t$ is a phase convention inherited by the Condon-Shortley phase convention. Coupling using Clebsch-Gordan/Wigner-3j symbols, keeps these relations intact \cite[Eq.~5.1.7]{Edmonds1957}; global phase factors like $i^{k+p+r}$ do not introduce a $t$-dependent phase. Note that the positive and negative components of these real operators are not independent, and one can define them as follows
\cite{BLANCO1997}:
\begin{equation}
    \begin{aligned}
        T^{r,\mathrm{real}}_t =
        &
        \begin{cases}
            \frac{(-1)^t}{\sqrt{2}}\left( T^r_t + (-1)^t T^r_{-t}\right), &\quad t>0,\\
            T^r_0, &\quad t=0,\\
            \frac{(-1)^t}{i\sqrt{2}}\left(T^r_{|t|} - (-1)^t T^r_{-|t|}  \right), &\quad t<0,\\
        \end{cases}
    \end{aligned}
    \label{eq:transform_to_real}
\end{equation}

If the spherical tensor components satisfy the following Hermitian conjugation relation:

\begin{equation}
    \left( T^r_t \right)^\dagger = (-1)^t T^r_{-t},
    \label{eq:spherical-hermitian}
\end{equation}

the transformations to Hermitian operators in eq. \eqref{eq:transform_to_hermitian} and so-called real operators in eq. \eqref{eq:transform_to_real_paper}, or rather \eqref{eq:transform_to_real}, coincide up to global sign and normalization conventions.

For $l'=l$, the multipole operators $W^{kpr}_t(l',l)$ defined in eq. \eqref{eq:multipole_tensor}, satisfy eq. \eqref{eq:spherical-hermitian}, and the construction as defined in eq. \eqref{eq:transform_to_real} results in real Hermitian multipoles.
For $l'\neq l$, however, the spherical multipole operators have the following Hermitian conjugation relation:
\begin{equation}
    \left(W^{kpr}_t(l',l)\right)^\dagger = (-1)^{l-l'+t} W^{kpr}_{-t}(l,l'),
\end{equation}
as derived in the appendix \ref{app:hermconj_multipole}, and a transformation as in eq.~\eqref{eq:transform_to_real}, does not result in both real and Hermitian multipoles.

One solution could be to introduce the factor $i^{l'+l}$ in the normalization of the orbital operators. Then, Hermitian conjugation would only raise the factor $(-1)^t$ due to the Condon-Shortley phase convention:
\begin{equation}
\left(\tilde{W}^{kpr}_t(l',l)\right)^\dagger = (-1)^t\tilde{W}^{kpr}_{-t}(l,l'),
\end{equation}
and the Hermitian multipoles constructed via eq. \eqref{eq:transform_to_hermitian} would be real multipoles as well, as the transformation under $\mathrm{SO}(3)$ is independent of $l'$ and $l$ (Eq. \eqref{eq:rotation_of_sphericaltensorcomponents}). Time-reversal symmetry, as discussed in subsection \ref{sec:decomp_time} would further resolve the relation between the $t$ and $-t$ components within the same $(l',l)$ block of such $\tilde{W}^{kpr,\mathrm{real}}_t(l',l)$ and allow reduced indexing analogous to eq. \eqref{eq:transform_to_real}.
As discussed in subsection \ref{subsec:decomp_orbital_tensor}, the phase factor $i^{l'+l}$ would also change the normalization convention for $l'=l$.

To stay consistent with the normalization for $l'=l$ we do not introduce the factor $i^{l'+l}$. We construct real Hermitian multipoles by performing both transformations to real \eqref{eq:transform_to_real} and Hermitian \eqref{eq:transform_to_hermitian} operators in turn.

We first perform the transformation to real components:
\begin{equation}
    \begin{aligned}
        &W^{kpr,\mathrm{real}}_t(l',l)\\ =
        &
        \begin{cases}
            \frac{(-1)^t}{\sqrt{2}}\left( W^{kpr}_t(l',l) +  (-1)^tW^{kpr}_{-t}(l',l)\right), & t>0,\\
            W^{kpr}_0(l',l), & t=0,\\
            \frac{(-1)^t}{i\sqrt{2}}\left( W^{kpr}_{|t|}(l',l) - (-1)^tW^{kpr}_{-|t|}(l',l)\right), & t<0,\\
        \end{cases}
    \end{aligned}
    \label{eq:real_multipoles}
\end{equation}

These real multipole components now transform under Hermitian conjugation as
\begin{equation}
    \begin{aligned}
        \left(W^{kpr,\mathrm{real}}_t(l',l) \right)^\dagger = (-1)^{l-l'} W^{kpr,\mathrm{real}}_t(l,l'),
    \end{aligned}
    \label{eq:hermconj_realmultipoles}
\end{equation}
Next we construct Hermitian multipole operators:

\begin{equation}
    \begin{aligned}
        &W^{kpr,\mathrm{real},+}_t(l',l)\\ &= \frac{1}{2}\left(W^{kpr,\mathrm{real}}_t(l',l) + (-1)^{l-l'}W^{kpr,\mathrm{real}}_t(l,l')\right),\\
        &W^{kpr,\mathrm{real},-}_t(l',l) \\&= \frac{1}{2i}\left(W^{kpr,\mathrm{real}}_t(l',l) - (-1)^{l-l'}W^{kpr,\mathrm{real}}_t(l,l')\right),\\
    \end{aligned}
    \label{eq:hermitian_multipoles}
\end{equation}

For these real Hermitian multipole components, there is a clear relation between the operators of opposite blocks $W^{kpr,\mathrm{real},\pm}_t(l',l)$ and $W^{kpr,\mathrm{real},\pm}_t(l,l')$. It thus makes sense to stick to an indexing convention like $l'\leq l$.

\section{Decomposition by Parity and Time-Reversal Symmetry}
\label{sec:decomp_parity_time}
Physical systems are often characterized by their spatial parity and time-reversal symmetry. Decomposing multipole operators according to these symmetries makes it possible to resolve their contributions at the level of the local electronic structure.

\subsection{Decomposition by parity}
The parity operator $P$ acts on orbital and spin states as
\begin{equation}
P|l,m\rangle = (-1)^l |l,m\rangle, 
\qquad 
P|s,m_s\rangle = |s,m_s\rangle.
\end{equation}
It follows that orbital operators transform as
\begin{equation}
P\, C^k_q(l',l)\, P^{-1} = (-1)^{l + l'} C^k_q(l',l),
\end{equation}
while spin tensor operators are invariant,
\begin{equation}
P\, \sigma^p_\mu \, P^{-1} = \sigma^p_\mu.
\end{equation}
Accordingly, the coupled tensor operators transform as
\begin{equation}
P\, [C^k_q(l',l)\otimes \sigma^p_\mu]^r_t \, P^{-1} 
= (-1)^{l + l'} [C^k_q(l',l)\otimes \sigma^p_\mu]^r_t.
\end{equation}

Thus, the parity of the coupled tensor operators, and hence of the multipoles, 
is determined by $(-1)^{l + l'}$:
\begin{equation}
 P W^{kpr\pm}_t(l',l) P^{-1} = (-1)^{l'+l} W^{kpr\pm}_t(l',l)
\end{equation}
In particular, operators connecting states 
with the same orbital angular momentum ($l=l'$) are even under parity, while 
operators connecting different $l$ sectors can have either even or odd parity.

\subsection{Decomposition by time-reversal symmetry}
\label{sec:decomp_time}
Next we decompose the multipole tensor operators by time-reversal symmetry.
As discussed in the appendix \ref{app:timereversal}, 
we consider the projection of an operator $A\in \mathrm{End}(\mathcal{H})$ onto its time-reversal even (\(\nu=0\)) and time-reversal odd (\(\nu=1\)) components:
\begin{equation}
A^{\nu}=P_{\nu}(A) = \frac{1}{2}(A + (-1)^{\nu}\Theta A \Theta^\dagger),
\end{equation}
where $\Theta A \Theta^\dagger$ is the time-reversed operator A, as determined in appendix \ref{app:timereversed_operators}, which we abbreviate by $\tau(A)$.

In the appendix \ref{app:timerev_sphericalmultipoleoperators}, we derive the time-reversed spherical multipole tensor components:
\begin{equation}
\tau \left(W^{kpr}_t(l',l)\right) = (-1)^{l'+l+k+p+t} W^{kpr}_{-t}(l',l).
\end{equation}

For our linear combinations within each $(l',l)$ block stated in eq. \eqref{eq:real_multipoles}, we obtain
\begin{equation}
    \tau\left(W^{kpr,\mathrm{real}}_t(l',l) \right) = (-1)^{l'+l+k+p}W^{kpr,\mathrm{real}}_t(l',l),
\end{equation}

and for our Hermitian multipole operators:
\begin{equation}
\begin{aligned}
        \tau\left(W^{kpr,\mathrm{real},+}_t(l',l)\right) &= (-1)^{l'+l+k+p}W^{kpr,\mathrm{real},+}_t(l',l),\\
        \tau\left(W^{kpr,\mathrm{real},-}_t(l',l)\right) &= (-1)^{l'+l+k+p+1}W^{kpr,\mathrm{real},-}_t(l',l),
\end{aligned}
\end{equation}
which enforces the following rule for trivial zero-valuedness:

\begin{table}[H]
    \centering
    \renewcommand{\arraystretch}{1.2}
    \begin{tabular}{ccc}
        \toprule
         & $W^{kpr,\mathrm{real},+,\nu}_t(l',l)$ & $W^{kpr,\mathrm{real},-,\nu}_t(l',l)$\\
        \midrule
        $l'+l+k+p+\nu$ even & $\neq0$ & $0$\\
        $l'+l+k+p+\nu$ odd  & $0$ & $\neq0$\\
        \bottomrule
    \end{tabular}
    \caption{
        Selection rules for the time-reversal even/odd Hermitian multipole operators $W^{kpr,\mathrm{real},\pm,\nu}_t(l',l)$.
        Depending on the parity of $l'+l+k+p+\nu$, the operator vanishes trivially.
    }
    \label{tab:hermitian_time_selection_rule}
\end{table}
Note that for $l'=l$ multipoles, only `$+$' multipoles are constructed, and the second row in this categorization in Table \ref{tab:hermitian_time_selection_rule} is never accessed, i.e. no time-reversal even/odd multipoles can be constructed with $k+p$ odd/even for $l'=l$.
It is also interesting to note that most of the multipoles of interest fall into the first row in Table \ref{tab:hermitian_time_selection_rule}, for example the charge, magnetic and magnetoelectric multipoles and many orbital operators like orbital dipole, spin-orbit coupling and the electric toroidal dipole.

With this definite relation between $l'+l+k+p+\nu$ and $\pm$, we can drop the $\pm$ label:
\begin{equation}
\begin{aligned}
    W^{\nu kpr, \mathrm{real}}_{l_1, l_2, t} &=
    \begin{cases}
        W^{kpr,\mathrm{real},+}_t(l',l), & l'+l+k+p+\nu \text{ even},\\
        W^{kpr,\mathrm{real},-}_t(l',l), & l'+l+k+p+\nu \text{ odd},\\
    \end{cases}
\end{aligned}
\label{eq:timesymmetrized_multipoles}
\end{equation}
using $l'=\mathrm{min}\{l_1, l_2\}$ and $l=\mathrm{max}\{l_1, l_2\}$ for definite ordering. We call $W^{\nu kpr, \mathrm{real}}_{l_1, l_2, t}$ time-reversal even and odd multipoles.
These multipoles form a complete orthogonal basis set consisting of Hermitian operators for the space $\mathrm{End}(\mathcal{H})$, providing a basis for the local density matrix including all inter-shell components, as shown in the appendix \ref{app:orthogonality_completeness} and \ref{app:multipole_decomp}. Time-reversal symmetry introduces an additional relation between different expectation values of the spherical multipole operators, i.e. spherical multipole moments, which recover the selection rules in Table \ref{tab:hermitian_time_selection_rule}, as shown in the appendix \ref{app:additional_relation}.

\section{A case study: Chirality in Tellurium}
\label{sec:casestudy}

One newly accessible multipole in our extension to inter-shell operators is the parity-odd, time-reversal even electric toroidal monopole (ETM), which has been proposed as a suitable order parameter for chirality~\cite{Oiwa_PRL:2022}. In our notation, the ETM corresponds to the operator
\begin{equation}
W^{\nu=0,110}_{l_1,l_2,0} \quad \text{with} \quad l_1+l_2 \text{ odd}.
\end{equation}
This is a scalar quantity that is insensitive to time-reversal, and as a monopole it is invariant under proper rotations. However, since it is parity-odd, it changes sign under inversion and mirror operations and so it is chiral.

An operator of identical symmetries is $r\cdot (l\times s)$, which has been previously discussed in the literature \cite{Kusunose_chirality_APL_2024}.
Cyclic permutation to $s \cdot (r \times l)$ reveals that it is of rank 1 in both spin and orbital parts.
When promoting the expression to a quantum mechanical operator, the form required is
\begin{equation}
\frac{1}{2}\left(\hat r\cdot (\hat l\times \hat s) + (\hat l\times \hat s) \cdot \hat r \right),
\label{eq:ETM_operator}
\end{equation}
to ensure physical meaning, i.e. Hermiticity,
since $r$ and $l$ do not commute.

For the operator $\hat r$, the only non-zero matrix elements are between orbital spaces $l_1$ and $l_2$ differing by 1, consistent with the restriction of the orbital rank $k$ being in the range $|l_1-l_2|$ to $l_1 + l_2$, and the requirement of $l_1 + l_2$ odd for parity-odd operators.

Due to orthogonality of the multipole basis, we can now identify that the only multipole moments of the density matrix, denoted by lowercase letters, see eq. \eqref{eq:multipole_moments}, that contribute to the expectation value of the operator eq. \eqref{eq:ETM_operator} are the electric toroidal monopole moments
\begin{equation}
w^{\nu=0,110}_{l_1,l_2,0} \quad \text{with} \quad l_2=l_1+1.
\end{equation}

Next, we show for the prototypical chiral material tellurium that this ETM, accessible with our multipole decomposition, indeed acts as an atomic-scale measure for structural chirality.

\subsection{Computational details}
To obtain the density matrix that we decompose into irreducible spherical tensor components, we employ density functional theory (DFT) as implemented in the Vienna ab initio simulation package (VASP)~\cite{Kresse:1996, KresseJoubert1999PAW}.
We choose a $\Gamma$-centered $7\times7\times6$ $k$-point grid and an energy cutoff of $350\,\mathrm{eV}$ for the plane-wave basis, which converge the total energy to within $1\,\mathrm{meV}$ per atom.
We use the Perdew-Burke-Ernzerhof (PBE) exchange-correlation functional~\cite{PBE:1996} and van der Waals forces according to the Tkatchenko-Scheffler method with iterative Hirshfeld partitioning~\cite{IVDW21:2013, IVDW21:2014}. 
We relax the tellurium structure with $P3_121$ space group until the forces acting on each atom are $<0.1\,\mathrm{meV}/\text{\AA}$. The relaxed lattice constants $a=b=4.43\,\text{\AA}$ and $c=5.92\,\text{\AA}$ differ by less than $0.6\%$ from the experimental values
~\cite{Keller_Te:1977}.
We then invert the optimized structure to obtain the opposite enantiomer belonging to the $P3_221$ space group (see Fig.~\ref{fig_Te_ETM}).
We include spin-orbit coupling in the final self-consistent DFT calculation to obtain the density matrix for extracting the multipole moments.

\subsection{Results}
We extract the atomic-site ETM moments $w^{\nu=0,110}$ with $(l_1,l_2)=(s,p)$ for both tellurium enantiomers from our multipole decomposition of the density matrix and report the values in Table~\ref{tab:etm}. Within one enantiomer, the ETM moment is the same on each of the three tellurium atoms in the unit cell. In the opposite enantiomer, the ETMs are again equal to each other, with the same size but opposite sign to the first enantiomer. We visualize this ETM pattern in the lower part of Fig.~\ref{fig_Te_ETM} by blue inward-pointing (red outward-pointing) hedgehogs for negative (positive) local monopole moments.
To conclude, our calculated ETM moments act as a local measure of chirality since they allow to differentiate the two enantiomers by assigning amplitudes on atomic sites that are equal in value but opposite in sign for the two enantiomers differing by the rotational sense of the 3-fold screw axis.
\begin{figure}[t!]
\includegraphics[trim = 80mm 20mm 60mm 10mm, clip, width=1\columnwidth]{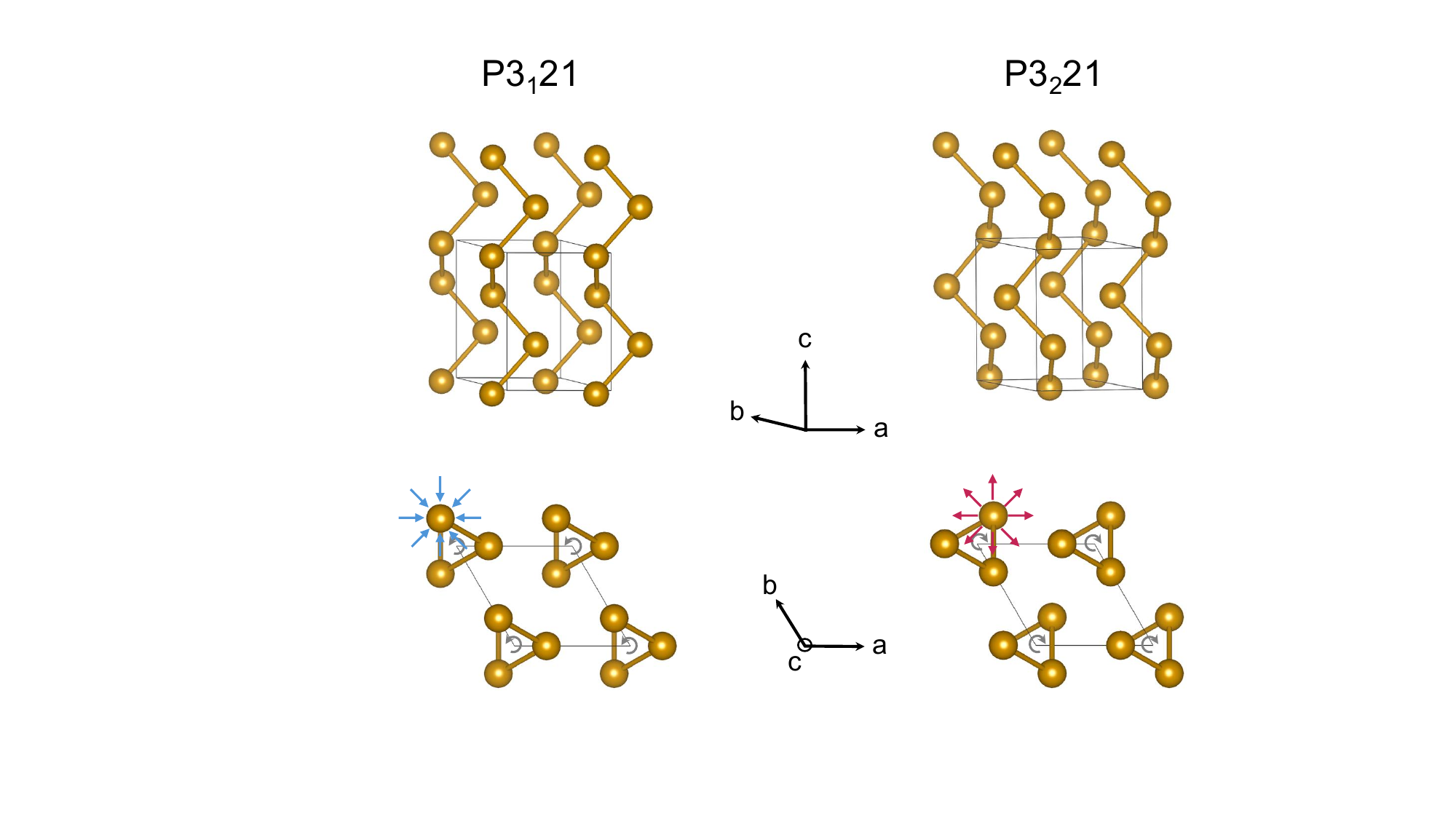}
\caption[]{Crystal structure of chiral tellurium visualized with \textsc{vesta}~\cite{VESTA_Momma:2008} with the $c$ axis vertical (upper part) and out of the plane (lower part). The curly gray arrows indicate the sense of rotation of the Te chains around the $c$ axis. 
The blue inward- and red outward-pointing arrows around one Te atom schematically depict the computed atomic ETM moments that are equal on each Te site within one enantiomer and switch sign between enantiomers.
}
\label{fig_Te_ETM}
\end{figure}
\begin{table}[H]
    \centering
    \begin{tabular}{cc|c}
    \toprule
    $w^{\nu=0,110}_{s,p,0}$ & $P3_121$ & $P3_221$\\
    \midrule
     Site 1 &   $-2.6 \times 10^{-4}$ &  $2.6 \times 10^{-4}$\\
     Site 2 &   $-2.6 \times 10^{-4}$ &  $2.6 \times 10^{-4}$\\
     Site 3 &   $-2.6 \times 10^{-4}$ &  $2.6 \times 10^{-4}$\\
    \bottomrule
    \end{tabular}
    \caption{Calculated $w^{\nu=0,110}_{s,p,0}$ for the three Te sites in the unit cell of tellurium.}  
\label{tab:etm}
\end{table}

\section{Summary}
In summary, we have completed the single-site multipole formalism by extending it with the missing inter-shell operators and demonstrated its applicability by calculating the electric toroidal monopoles for the two enantiomers of tellurium.
We revisited the formalism for the decomposition of the density matrix into multipole moments by starting with the space of endomorphisms acting on the single-particle Hilbert space of an electron localized on a given site.
By analyzing its tensor product structure, we separated orbital and spin operator subspaces, which enabled us to separately decompose each subspace by the group action of $\mathrm{SO}(3)$. Keeping track of orbital quantum numbers, we obtained a basis of the orbital operator space, and with subsequent coupling to the spin operators, we constructed a complete orthogonal set of spherical multipole operators, consistent with existing fixed-shell formulations.\\
We then transformed the spherical multipole tensors into a real Hermitian operator basis and identified their symmetries under parity and time-reversal.
With these multipoles and their classification, we then decomposed the density matrix as obtained from DFT calculations into symmetry-labeled amplitudes, providing a natural physically interpretable representation of the local electronic system. As many physical operators have definite spatial and time-reversal symmetry, we can identify which will give a non-zero expectation value and have a metric to compare them between different systems and sites.\\
Finally, we demonstrated that the newly accessible electric toroidal monopole is equal in value and opposite in sign on the atoms in the two enantiomers of tellurium, suggesting that it provides a local atomic measure of chirality.

\section*{Acknowledgments}
This work was supported by the Swiss National Science Foundation (SNSF) under Grant Nos. 10002603 and 225790.
Computational resources were provided by the ETH Zurich Euler cluster. 

\section*{Data availability}
The relevant input files of our ab initio calculations and the data supporting the findings of this work will be publicly available upon publication of this work.  \vfill\eject

\appendix

\section{Explicit derivations}

\subsection{Derivation of the spherical spin tensor components in angular momentum basis}
\label{app:derivation_spinopangular}
For \(s=\frac12\), the spin-operator space decomposes as
\begin{equation}
V_s\otimes V_s^* \simeq \frac12\otimes \frac12 = 0\oplus 1 .
\end{equation}
Hence the four spin operators may be organized into irreducible spherical tensors \(\sigma^p_y\), with \(p=0,1\):
\begin{equation}
\sigma^p_y
=
\sum_{m,m'}
n_p^{-1}
(-1)^{s-m}
\begin{pmatrix}
s & p & s\\
-m & y & m'
\end{pmatrix}
|s m\rangle\langle s m'|.
\end{equation}
We show that the spherical spin operators as given in eq.~\eqref{eq:spherical_spin_operator} are reproduced by choosing the following normalization constant:
\begin{equation}
n_p=\frac{1}{\sqrt{(p+2)!}},
\end{equation}
which for $p=0,1$ is \(n_0^{-1}=\sqrt2\) and \(n_1^{-1}=\sqrt6\).

For \(p=0\), using
\begin{equation}
\begin{pmatrix}
s & 0 & s\\
-m & 0 & m'
\end{pmatrix}
=
\frac{(-1)^{s-m}}{\sqrt{2s+1}}\delta_{m m'} ,
\end{equation}
with \(s=\frac12\), one obtains
\begin{equation}
\sigma^0_0
=
\sum_m |s m\rangle\langle s m|
=
\mathbb I .
\end{equation}
Thus the scalar component is normalized as the identity.

For \(p=1\), direct evaluation of the Wigner-3j symbols gives:
\begin{equation}
\begin{aligned}
\begin{pmatrix}
\frac12 & 1 & \frac12\\
-\frac12 & 0 & \frac12
\end{pmatrix}
=
\begin{pmatrix}
\frac12 & 1 & \frac12\\
\frac12 & 0 & -\frac12
\end{pmatrix}
=
\frac{1}{\sqrt6},
\\
\begin{pmatrix}
\frac12 & 1 & \frac12\\
-\frac12 & 1 & -\frac12
\end{pmatrix}
=
-\frac{1}{\sqrt3},
\\
\begin{pmatrix}
\frac12 & 1 & \frac12\\
\frac12 & -1 & \frac12
\end{pmatrix}
=
-\frac{1}{\sqrt3},
\end{aligned}
\end{equation}

which gives with
\begin{equation}
|+\rangle\equiv \left|\frac12,\frac12\right\rangle,
\qquad
|-\rangle\equiv \left|\frac12,-\frac12\right\rangle ,
\end{equation}
the components
\begin{equation}
\sigma^1_0
=
|+\rangle\langle+|-|-\rangle\langle-|
=
\sigma_z ,
\end{equation}
and
\begin{equation}
\sigma^1_{+1}
=
-\sqrt2\,|+\rangle\langle-| ,
\qquad
\sigma^1_{-1}
=
\sqrt2\,|-\rangle\langle+| .
\end{equation}
Since
\begin{equation}
\sigma_x=|+\rangle\langle-|+|-\rangle\langle+|,
\qquad
\sigma_y=-i|+\rangle\langle-|+i|-\rangle\langle+| ,
\end{equation}
these may equivalently be written as
\begin{equation}
\sigma^1_{\pm1}
=
\mp \frac{1}{\sqrt2}
\left(\sigma_x\pm i\sigma_y\right).
\end{equation}
Thus the chosen normalization reproduces the identity and the Pauli matrices in spherical form.
The same explicit expressions also give
\begin{equation}
(\sigma^0_0)^\dagger=\sigma^0_0,
\qquad
(\sigma^1_0)^\dagger=\sigma^1_0,
\qquad
(\sigma^1_{\pm1})^\dagger=-\sigma^1_{\mp1}.
\end{equation}
Therefore, for both \(p=0\) and \(p=1\),
\begin{equation}
(\sigma^p_y)^\dagger
=
(-1)^y\sigma^p_{-y}.
\end{equation}

\subsection{Hermitian adjoint of the spherical orbital tensor components}
\label{app:hermconj_orbitaltensor}
We compute the Hermitian adjoint of the spherical orbital tensor components as defined in eq. \eqref{eq:orbital_tensor_operator}:
\begin{equation}
\begin{aligned}
&\left(C^k_q(l',l)\right)^\dagger\\
&=
\sum_{m,m'}
n_{lkl'}^{-1}
(-1)^{l-m}
\begin{pmatrix}
l & k & l'\\
-m & q & m'
\end{pmatrix}
|l'm'\rangle\langle lm| \\
&=
\sum_{m,m'}
n_{lkl'}^{-1}
(-1)^{l-m}
\begin{pmatrix}
l' & k & l\\
-m' & -q & m
\end{pmatrix}
|l'm'\rangle\langle lm|,
\end{aligned}
\end{equation}
where in the second line we used the symmetry of the Wigner-3j symbol (switching the sign of the magnetic moments and permuting the columns).

Since the Wigner-3j symbol enforces
\begin{equation}
-m+q+m'=0,
\qquad\text{so that}\qquad m=q+m',
\end{equation}
we can rewrite
\begin{equation}
(-1)^{l-m}
=
(-1)^{l-l'-q}(-1)^{l'-m'},
\end{equation}

which, using $n_{lkl'}=n_{l'kl}$, results in
\begin{equation}
\begin{aligned}
&\left(C^k_q(l',l)\right)^\dagger\\
&= (-1)^{l-l'-q}\\
& \hspace{1em} \times
\sum_{m,m'}
n_{l'kl}^{-1}
(-1)^{l'-m'}
\begin{pmatrix}
l' & k & l\\
-m' & -q & m
\end{pmatrix}
|l'm'\rangle\langle lm| \\
&=
(-1)^{l-l'-q} C^k_{-q}(l,l'),
\end{aligned}
\end{equation}

\subsection{Hermitian adjoint of the spherical multipole operators}
\label{app:hermconj_multipole}

We first compute the Hermitian adjoint of the components of the spherical multipole tensor as defined in equation \eqref{eq:multipole_realnorm} with a real normalization constant:
\begin{equation}
\begin{aligned}
\left(T^r_t(l',l)\right)^\dagger
=&
n_{3j}^{-1} (-1)^{k+p}
\sum_{q,y} (-1)^{-q-y}
\begin{pmatrix}
    k & r & p \\
    -q & t & -y
\end{pmatrix}\\
& \times
\left(C^k_q(l',l)\right)^\dagger
\otimes
\left(\sigma^p_y\right)^\dagger.
\end{aligned}
\end{equation}
Using the Hermitian conjugate of the orbital and spin tensors given in eq.~\eqref{eq:hermconj_orbitaltensor} and \eqref{eq:hermconj_spintensor}, this is
\begin{equation}
...=
n_{3j}^{-1} (-1)^{k+p+l-l'}
\sum_{q,y}
\begin{pmatrix}
    k & r & p \\
    -q & t & -y
\end{pmatrix}
C^k_{-q}(l,l')
\otimes
\sigma^p_{-y},
\end{equation}
where we used that $q$ is an integer and thus $(-1)^{-2q}=1$.
Now relabeling the summation indices $q\to-q$, $y\to-y$ gives

\begin{equation}
...=
n_{3j}^{-1} (-1)^{k+p+l-l'}
\sum_{q,y}
\begin{pmatrix}
    k & r & p \\
    q & t & y
\end{pmatrix}
C^k_q(l,l')
\otimes
\sigma^p_y.
\end{equation}
Using the Wigner-3j symmetry we obtain
\begin{equation}
\begin{aligned}
&\left(T^r_t(l',l)\right)^\dagger=
n_{3j}^{-1} (-1)^{k+p+l-l'}
(-1)^{k+r+p}\\
& \hspace{2em}\times
\sum_{q,y}
\begin{pmatrix}
    k & r & p \\
    -q & -t & -y
\end{pmatrix}
C^k_q(l,l')
\otimes
\sigma^p_y.
\end{aligned}
\end{equation}

Since the Wigner-3j symbol enforces
\begin{equation}
-q-t-y=0,
\end{equation}
we have
\begin{equation}
(-1)^{-q-y}=(-1)^t.
\end{equation}

Therefore, introducing $1=(-1)^t(-1)^{-q-y}$, we obtain
\begin{equation}
\left(T^r_t(l',l)\right)^\dagger
=
(-1)^{k+p+r+l-l'}(-1)^t
T^r_{-t}(l,l').
\end{equation}

With the complex factor $i^{k+p+r}$ in the normalization constant, the Hermitian adjoint of the multipole tensor components, as defined in eq. \eqref{eq:multipole_tensor}, are:
\begin{equation}
   \left(W^{kpr}_t(l',l)\right)^\dagger = (-1)^{l-l'+t} W^{kpr}_{-t}(l,l'),
\end{equation}

\subsection{Orthogonality of operators}

\subsubsection{Orthogonality of the spherical orbital operators}
\label{app:orthogonality_orbital}
We determine the Hilbert-Schmidt inner product of the spherical tensor operator components as defined in eq. \eqref{eq:orbital_tensor_operator}.
Orthogonality of the spherical components of orbital tensors that maps between different $(l',l)$ blocks follows from the orthogonality of the $l$ orbital states. Orthogonality for orbital tensor components in the same $(l',l)$ block can be shown as follows:
\begin{equation}
\begin{aligned}
&\mathrm{Tr}\!\left[
\left(C^k_q(l',l)\right)^\dagger
C^{k'}_{q'}(l',l)
\right]\\
&= \mathrm{Tr}\!\left[(-1)^{l-l'-q}
\left(C^k_{-q}(l,l')\right)
C^{k'}_{q'}(l',l)
\right]
\end{aligned}
\end{equation}
Due to orthogonality of the orbital states and symmetry of the normalization constant $n_{l'kl}=n_{lkl'}$ this is:
\begin{equation}
\begin{aligned}
...=(-1)^{l-l'-q}\sum_{m,m'} &
n_{lkl'}^{-1}(-1)^{l'-m'}  
\begin{pmatrix}
l' & k & l\\
-m' & -q & m
\end{pmatrix}  \\
\times &n_{l'kl}^{-1}(-1)^{l-m} 
\begin{pmatrix}
l & k' & l'\\
-m & q' & m'
\end{pmatrix}
\end{aligned}
\end{equation}
Due to conservation of magnetic moment $-m'-q+m=-m+q'+m'=0$, we get $-q-m'-m=-2m$, hence with $m$ and $l$ being integers $(-1)^{l-l'-q+l'-m'+l-m}=(-1)^{2l-2m}=1$
\begin{equation}
...=n_{l'kl}^{-1}n_{lk'l'}^{-1}
\sum_{m,m'}
\begin{pmatrix}
l' & k & l\\
-m' & -q & m
\end{pmatrix}
\begin{pmatrix}
l & k' & l'\\
-m & q' & m'
\end{pmatrix}.
\end{equation}

We flip the signs on the first Wigner 3-j symbol and simultaneously switch its first and last column
\begin{equation}
...=n_{l'kl}^{-1}n_{lk'l'}^{-1}
\sum_{m,m'}
\begin{pmatrix}
l & k & l'\\
-m & q & m'
\end{pmatrix}
\begin{pmatrix}
l & k' & l'\\
-m & q' & m'
\end{pmatrix}.
\end{equation}

Using the orthogonality relation of Wigner-3j symbols, and the fact that $n_{l'kl}=n_{lkl'}$ we obtain
\begin{equation}
\mathrm{Tr}\!\left[
\left(C^k_q(l',l)\right)^\dagger
C^{k'}_{q'}(l',l)
\right]
=
\frac{\delta_{kk'}\delta_{qq'}}{n_{lkl'}^2 (2k+1)}
.
\label{eq:orthogonality_orbital}
\end{equation}
Thus the tensor operators, or rather their components, are mutually orthogonal with respect to the
Hilbert-Schmidt inner product.

\subsubsection{Orthogonality of the spherical multipole operators}
\label{app:orthogonality_sphericalmultipole}

We now show that the multipole tensor components as defined in eq. \eqref{eq:multipole_tensor} are orthogonal with respect to the Hilbert-Schmidt inner product. For different $k$ or $p$ values, orthogonality follows from orthogonality of the orbital and spin tensors, for different $r$ and $t$ values we have
\begin{align}
&\mathrm{Tr}\left[
\left(T^r_t(l',l)\right)^\dagger
T^{r'}_{t'}(l',l)
\right] .
\end{align}
Substituting the definition of the tensors gives
\begin{align}
&\mathrm{Tr}\left[
\left(W^{kpr}_t(l',l)\right)^\dagger
W^{kpr'}_{t'}(l',l)
\right]
\nonumber\\
&=
\overline{n_{kpr}^{-1}}n_{kpr'}^{-1}
\sum_{q,y}
\sum_{q',y'}
(-1)^{-q-y}
(-1)^{-q'-y'}\\
& \hspace{2em}\times
\begin{pmatrix}
    k & r & p \\
    -q & t & -y
\end{pmatrix}
\begin{pmatrix}
    k & r' & p \\
    -q' & t' & -y'
\end{pmatrix}
\nonumber\\
&\hspace{2em}\times
\mathrm{Tr}\left[
\left(C^k_q(l',l)\right)^\dagger
C^k_{q'}(l',l)
\right]
\mathrm{Tr}\left[
\left(\sigma^p_y\right)^\dagger
\sigma^p_{y'}
\right] .
\end{align}
Here the phase factor \((-1)^{k+p}\) drops out because it appears twice.

Using the orthogonality of the orbital and spin tensor components, see equations \eqref{eq:orthogonality_orbital} and \eqref{eq:orthogonality_spin},
we obtain
\begin{align}
...=&\overline{n_{kpr}^{-1}}n_{kpr'}^{-1}
\frac{2}{n_{l'kl}^2(2k+1)}\\
&\times
\sum_{q,y}
\begin{pmatrix}
    k & r & p \\
    -q & t & -y
\end{pmatrix}
\begin{pmatrix}
    k & r' & p \\
    -q & t' & -y
\end{pmatrix}.
\end{align}
The remaining sum is the standard orthogonality relation of Wigner-\(3j\) symbols,
\begin{equation}
\sum_{q,y}
\begin{pmatrix}
    k & r & p \\
    -q & t & -y
\end{pmatrix}
\begin{pmatrix}
    k & r' & p \\
    -q & t' & -y
\end{pmatrix}
=
\frac{\delta_{r r'}\delta_{t t'}}{2r+1}.
\end{equation}
Therefore,
\begin{equation}
\begin{aligned}
&\mathrm{Tr}\left[
\left(W^{kpr}_t(l',l)\right)^\dagger
W^{kpr'}_{t'}(l',l)
\right]\\
&\hspace{2em}=
\frac{2\delta_{r r'}\delta_{t t'}}{|n_{kpr}|^2(2r+1)n_{l'kl}^2(2k+1)}
 .
 \end{aligned}
\end{equation}

Including also orthogonality relations for $k$ and $p$, and the fact that $(l',l)$ index different operator spaces, which may be understood as different blocks of a bigger matrix operator, one obtains
\begin{equation}
\begin{aligned}
&\mathrm{Tr}\left[
\left(W^{kpr}_{t}(l',l)\right)^\dagger
W^{k'p'r'}_{t'}(\tilde l',\tilde l)
\right]\\
&\hspace{2em}=\frac{2\delta_{k k'}\delta_{p p'}
\delta_{r r'}\delta_{t t'}
\delta_{l \tilde l}\delta_{l' \tilde l'}}
{|n_{kpr}|^2(2r+1)n_{l'kl}^2(2k+1)}.
\end{aligned}
\end{equation}

\subsubsection{Orthogonality and completeness of the time even and odd multipole operators}
\label{app:orthogonality_completeness}
As shown in the appendix \ref{app:orthogonality_sphericalmultipole}, the spherical multipole operators are mutually orthogonal. As we have truncated the Hilbert space of the electron space $\tilde{\mathcal{H}}$ at some orbital quantum number $l_{max}$, the Hilbert space $\mathcal{H}$ considered is finite dimensional, and thus so is the operator space $\mathrm{End}(\mathcal{H})$. With the identification of the canonical basis elements $|u\rangle\langle v|$ with $|u\rangle \otimes\langle v|$ we have an isomorphism between the operator space $\mathrm{End}(\mathcal{H})$ and the tensor product space $\mathcal{H}\otimes \mathcal{H}^*$, as discussed in section \ref{sec:basis}. 
The Hilbert space $\mathcal{H}$ has dimension 
\begin{equation}
    \mathrm{dim}\mathcal{H}=\sum_{l=0}^{l_{max}} 2(2l+1)=2(l_{max}+1)^2,
\end{equation}
the factor 2 arises from the spin-$\frac12$ space and each orbital subspace has $2l+1$ orbital states.
Thus, the operator space has dimension $4(l_{max}+1)^4$.

We obtained the spherical tensor operators $W^{kpr}_t(l',l)$ by coupling orbital
operator of rank \(k\) with a spin operator of rank \(p\) to a total
operator of rank \(r\). The total number of operators can be computed as follows:

\begin{align}
    N
    &=
    \sum_{l',l=0}^{l_{\max}}
    \sum_{k=|l-l'|}^{l+l'}
    \sum_{p=0}^{1}
    \sum_{r=|k-p|}^{k+p}
    (2r+1) \\
    &=
    \sum_{l',l=0}^{l_{\max}}
    \sum_{k=|l-l'|}^{l+l'}
    \sum_{p=0}^{1}
    (2k+1)(2p+1) \\
    &=
    4
    \sum_{l',l=0}^{l_{\max}}
    \sum_{k=|l-l'|}^{l+l'}
    (2k+1) \\
    &=
    4
    \sum_{l',l=0}^{l_{\max}}
    (2l+1)(2l'+1) \\
    &=
    4
    \left(
    \sum_{l=0}^{l_{\max}} (2l+1)
    \right)^2 \\
    &=
    4(l_{\max}+1)^4,
\end{align}
where we have used
\begin{equation}
    \sum_{x=|n-m|}^{n+m}2x+1 = (2n+1)(2m+1),
\end{equation}
in the second and fourth equality. This equality can be shown using the Gaussian summation formula with assuming $m\leq n$, which due to the symmetry of the result in $n$ and $m$ gives the correct result.

The transformation from complex spherical tensor components to real tensor components within each fixed $(l',l)$ block is bijective and preserves the Hilbert-Schmidt inner product. It therefore defines a unitary change of basis in the corresponding complex operator space.

The decomposition of this complex operator space according to Hermiticity yields two real vector spaces: the subspace of Hermitian operators and the subspace of anti-Hermitian operators. Each has the same real dimension as the complex dimension of $\mathrm{End}(\mathcal{H})$, and the two subspaces are related by multiplication by $i$. Since the Hermitian adjoint of each real multipole tensor component is again a real multipole tensor component, see eq.~\eqref{eq:hermconj_realmultipoles}, one can form Hermitian and anti-Hermitian combinations of each adjoint-related pair. This construction preserves mutual orthogonality. For components with $l\neq l'$, however, these combinations have one half of the Hilbert--Schmidt norm squared of the original inter-shell operator. This change of norm is accounted for explicitly by the factor $2^{\delta_{l\neq l'}}$ in the reconstruction formula.

Due to the definite time-reversal symmetry of the real Hermitian multipole operators, a further decomposition by time-reversal symmetry does not double the number of operators, see Table \ref{tab:hermitian_time_selection_rule}. 

By virtue of matching dimensions, orthogonality of the spherical multipole tensors and all subsequent transformations being orthogonal, the time-reversal even and odd real Hermitian multipole operators $W^{\nu kpr,\mathrm{real}}_{l_1,l_2,t}$ form a complete orthogonal real basis of $\mathrm{End}(\mathcal{H})_\mathrm{Herm}$, and after allowing for complex coefficients, a basis of $\mathrm{End}(\mathcal{H})$.

\subsection{Time-reversal of operators}
\label{app:timereversed_operators}
In the following we determine the time-reversed operators of the spherical, spin, and multipole operators as well as the density matrix.
As described in the appendix \ref{app:timereversal}, in particular with claim \ref{claim:timereversal}, we determine the effect of the time-reversal operator by its action on the basis.\\

The typical convention for the orbital states is $\Theta|l,m\rangle = (-1)^{m} |l,-m\rangle$, 
as in Sakurai's Modern Quantum Mechanics \cite{sakurai}.
It reflects complex conjugation of the spherical harmonics in the Condon-Shortley phase convention.
The application of $\Theta$ on the Pauli matrices is given by $\Theta\vec{\sigma}\Theta^{-1} = -\vec{\sigma}$ . With the choice of spherical spin operator as in eq.~\eqref{eq:spherical_spin_operator}, it implies that the spin states transform as $\Theta|s,m_s\rangle = (-1)^{s-m_s} |s,-m_s\rangle$ \cite{berkeley_timereversal}. For the real components of the spin tensor operator we have $\Theta \sigma^{p, \mathrm{real}}_\alpha \Theta^{-1}=(-1)^{p\neq 0}\sigma^{p, \mathrm {real}}_\alpha$, and for the spherical components
\begin{equation}
    \Theta \sigma^p_y\Theta^{-1} = (-1)^{p-y}\sigma^p_{-y}
    \label{eq:timereversed_spin_tensor}
\end{equation}
To show this, one can use $\Theta = K \sigma_y$ on the spin operators in spherical tensor form as written in \ref{sec:basis}, as is described by \cite{dresselhaus},
or equivalently $\Theta = i \sigma_y K$ up to an overall phase, as described in \cite{sakurai}, where $K$ is the complex-conjugate operator.

\subsubsection{Time-reversal of the orbital tensor}
\label{app:timerev_orbitaltensor}

Using the Condon-Shortley convention
\begin{equation}
\Theta |lm\rangle = (-1)^m |l,-m\rangle ,
\end{equation}
and the antiunitarity of \(\Theta\), we obtain
\begin{align}
&\Theta C^k_q(l',l)\Theta^{-1}\\
&=
\sum_{m,m'}
n_{lkl'}^{-1}
(-1)^{l-m}
\begin{pmatrix}
l & k & l'\\
-m & q & m'
\end{pmatrix}
\\
& \hspace{3em}\times
(-1)^{m+m'}
|l,-m\rangle \langle l',-m'|
\nonumber\\
&=
\sum_{m,m'}
n_{lkl'}^{-1}
(-1)^{l+m}
\begin{pmatrix}
l & k & l'\\
m & q & -m'
\end{pmatrix}
\\
& \hspace{3em}\times
(-1)^{m+m'}
|lm\rangle \langle l'm'| .
\end{align}
Using the Wigner-3j symmetry
\begin{equation}
\begin{pmatrix}
l & k & l'\\
m & q & -m'
\end{pmatrix}
=
(-1)^{l+k+l'}
\begin{pmatrix}
l & k & l'\\
-m & -q & m'
\end{pmatrix},
\end{equation}
this becomes
\begin{align}
&\Theta  C^k_q(l',l)\Theta^{-1}\\
&=
(-1)^{l+k+l'}
\sum_{m,m'}
n_{lkl'}^{-1}
(-1)^{l+m}\\
&\quad \times
\begin{pmatrix}
l & k & l'\\
-m & -q & m'
\end{pmatrix}
(-1)^{m+m'}
|lm\rangle \langle l'm'| .
\end{align}
The Wigner-3j symbol imposes
\begin{equation}
-m-q+m'=0 \to m'+m=q+2m,
\end{equation}
and hence
\begin{equation}
(-1)^{m+m'} = (-1)^{-q}.
\end{equation}
Therefore
\begin{align}
&\Theta C^k_q(l',l)\Theta^{-1}\\
&=
(-1)^{l+k+l'}(-1)^{-q}\\
&\quad \times
\sum_{m,m'}
n_{lkl'}^{-1}
(-1)^{l-m}
\begin{pmatrix}
l & k & l'\\
-m & -q & m'
\end{pmatrix}
|lm\rangle \langle l'm'|
\nonumber\\
&=
(-1)^{l+l'+k-q}
 C^k_{-q}(l',l).
 \label{eq:timereversed_orbital_operator}
\end{align}

Time-reversal therefore maps the tensor component \(q\) to \(-q\), but it
does not exchange \(l\) and \(l'\). Time-reversal symmetry thus relates the $q$ and $-q$ components within the same $(l',l)$ block.

\subsubsection{Time-reversal of the spherical multipole operators}
\label{app:timerev_sphericalmultipoleoperators}

Using Eqs.~\eqref{eq:timereversed_spin_tensor} and
\eqref{eq:timereversed_orbital_operator}, we can determine the
time-reversal transformation of the coupled multipole tensor. For the multipole tensor components as defined in equation \eqref{eq:multipole_tensor}, we obtain
\begin{equation}
\begin{aligned}
&\Theta W^{kpr}_t(l',l)\Theta^{-1}\\
&=
\overline{n_{kpr}^{-1}}
(-1)^{k+p}
\sum_{q,y}
(-1)^{-q-y}
\begin{pmatrix}
k & r & p\\
-q & t & -y
\end{pmatrix}\\
&\qquad \times
(-1)^{l+l'+k-q}
(-1)^{p-y}
C^k_{-q}(l',l)\otimes \sigma^p_{-y}
\nonumber\\
&=(-1)^{l+l'}
\overline{n_{kpr}^{-1}}
\sum_{q,y}
\begin{pmatrix}
k & r & p\\
-q & t & -y
\end{pmatrix}
C^k_{-q}(l',l)\otimes \sigma^p_{-y}.
\end{aligned}
\end{equation}
Relabeling \(q\mapsto -q\) and \(y\mapsto -y\) gives
\begin{equation}
\begin{aligned}
&\Theta W^{kpr}_t(l',l)\Theta^{-1}\\
&=(-1)^{l+l'}
\overline{n_{kpr}^{-1}}
\sum_{q,y}
\begin{pmatrix}
k & r & p\\
q & t & y
\end{pmatrix}
C^k_q(l',l)\otimes \sigma^p_y .
\end{aligned}
\end{equation}
Using the Wigner-3j symmetry
\begin{equation}
\begin{pmatrix}
k & r & p\\
q & t & y
\end{pmatrix}
=
(-1)^{k+r+p}
\begin{pmatrix}
k & r & p\\
-q & -t & -y
\end{pmatrix},
\end{equation}
and $\overline{n_{kpr}^{-1}}=(-1)^{-k-p-r}n_{kpr}^{-1}$
we find
\begin{equation}
\begin{aligned}
&\Theta W^{kpr}_t(l',l)\Theta^{-1}\\
&=(-1)^{l+l'}
n_{kpr}^{-1}
\sum_{q,y}
\begin{pmatrix}
k & r & p\\
-q & -t & -y
\end{pmatrix}
C^k_q(l',l)\otimes \sigma^p_y .
\end{aligned}
\end{equation}
The Wigner-3j symbol in \(W^{kpr}_{-t}\) imposes
\begin{equation}
-q-t-y=0,
\end{equation}
and therefore
\begin{equation}
\begin{aligned}
&\Theta W^{kpr}_t(l',l)\Theta^{-1}=(-1)^{l+l'+t}
n_{kpr}^{-1}\\
&\hspace{1em}\times \sum_{q,y}(-1)^{-q-y}
\begin{pmatrix}
k & r & p\\
-q & -t & -y
\end{pmatrix}
C^k_q(l',l)\otimes \sigma^p_y .
\end{aligned}
\end{equation}
With $k$ and $p$ being integers, we can further introduce
\begin{equation}
1=(-1)^{k+p}(-1)^{k+p}
\end{equation}
and obtain

\begin{equation}
\Theta W^{kpr}_t(l',l)\Theta^{-1}
=
(-1)^{l+l'+k+p+t}
W^{kpr}_{-t}(l',l).
\label{eq:timereversed_multipole_tensor}
\end{equation}

\subsubsection{Time-reversal of the density matrix}
\label{app:timerev_densitymatrix}
We work in the product basis
\begin{equation}
|l,m\rangle |s, m_s\rangle
\equiv
|l,m\rangle \otimes |s, m_s\rangle ,
\end{equation}
and use the Condon-Shortley convention
\begin{equation}
\begin{aligned}
\Theta |l,m\rangle &= (-1)^m |l,-m\rangle,\\
\Theta |s, m_s\rangle &= (-1)^{s-m_s}|s,-m_s\rangle .
\end{aligned}
\end{equation}
Thus
\begin{equation}
\Theta \bigl(|l,m\rangle |s, m_s\rangle\bigr)
=
(-1)^{m+s-m_s}|l,-m\rangle |s,-m_s\rangle .
\end{equation}

Writing
\begin{equation}
\rho
=
\sum_{\substack{l m m_s\\ l' m' m_s'}}
\rho_{lm m_s,l'm'm_s'}
|l,m\rangle |s, m_s\rangle
\langle l',m'|\langle s, m_s'| ,
\end{equation}
antiunitarity of \(\Theta\) gives
\begin{align}
\tau(\rho)
&\equiv \Theta \rho \Theta^{-1}
\nonumber\\
&=
\sum_{\substack{l m m_s\\ l' m' m_s'}}
\overline{\rho_{lm m_s,l'm'm_s'}}
(-1)^{m+s-m_s}
(-1)^{m'+s-m_s'}
\nonumber\\
&\qquad\qquad\times
|l,-m\rangle |s,-m_s\rangle
\langle l',-m'|\langle s,-m_s'| .
\end{align}
Relabeling
\begin{equation}
\begin{aligned}
m\mapsto -m,\qquad
m'\mapsto -m',\\
m_s\mapsto -m_s,\qquad
m_s'\mapsto -m_s',
\end{aligned}
\end{equation}
and using Hermiticity of the density matrix,
\begin{equation}
\overline{
\rho_{l,-m,-m_s,l',-m',-m_s'}
}
=
\rho_{l',-m',-m_s',l,-m,-m_s},
\end{equation}
one obtains
\begin{equation}
\begin{aligned}
&\tau(\rho)_{lm m_s,l'm'm_s'}\\
&=
(-1)^{m+m'+2s+m_s+m_s'}
\rho_{l',-m',-m_s',l,-m,-m_s}.
\end{aligned}
\label{eq:tr_density_full_basis}
\end{equation}
For an electron, \(s=1/2\), so this becomes
\begin{equation}
\begin{aligned}
&\tau(\rho)_{lm m_s,l'm'm_s'}\\
&=
(-1)^{m+m'+1+m_s+m_s'}
\rho_{l',-m',-m_s',l,-m,-m_s}.
\end{aligned}
\end{equation}

We now rewrite the same transformation in spin channels. Let
\(\sigma^p_\alpha\) denote a real spin basis with
\begin{equation}
\sigma^0_0 = \mathbbm 1_s,
\qquad
\sigma^1_\alpha \in \{\sigma_x,\sigma_y,\sigma_z\}.
\end{equation}
Using
\begin{equation}
\Tr_s\!\left[
(\sigma^p_\alpha)^\dagger \sigma^{p'}_{\alpha'}
\right]
=
2\delta_{pp'}\delta_{\alpha\alpha'},
\end{equation}
we decompose
\begin{equation}
\rho
=
\sum_{p,\alpha}
\rho^p_\alpha \otimes \sigma^p_\alpha,
\qquad
\rho^p_\alpha
=
\frac{1}{2}
\Tr_s\!\left[
(\sigma^p_\alpha)^\dagger \rho
\right].
\end{equation}
Since the identity is time-reversal even and the Pauli matrices are
time-reversal odd,
\begin{equation}
\Theta_s \sigma^p_\alpha \Theta_s^{-1}
=
(-1)^p \sigma^p_\alpha ,
\end{equation}
we find
\begin{equation}
\Theta \rho \Theta^{-1}
=
\sum_{p,\alpha}
(-1)^p
\Theta_l \rho^p_\alpha \Theta_l^{-1}
\otimes
\sigma^p_\alpha .
\end{equation}

Now write
\begin{equation}
\rho^p_\alpha
=
\sum_{l m,l'm'}
\rho^p_{\alpha,lm,l'm'}
|l,m\rangle \langle l',m'| .
\end{equation}
Using Hermiticity of \(\rho^p_\alpha\),
\begin{equation}
\overline{
\rho^p_{\alpha,l,-m,l',-m'}
}
=
\rho^p_{\alpha,l',-m',l,-m},
\end{equation}
the orbital part transforms as
\begin{align}
\left(
\Theta_l \rho^p_\alpha \Theta_l^{-1}
\right)_{lm,l'm'}
&=
(-1)^{m+m'}
\rho^p_{\alpha,l',-m',l,-m}.
\end{align}
Therefore
\begin{equation}
\tau(\rho)^p_{\alpha,lm,l'm'}
=
(-1)^p
(-1)^{m+m'}
\rho^p_{\alpha,l',-m',l,-m}.
\label{eq:tr_density_real_spin_channel}
\end{equation}
Thus the spin-scalar channel \(p=0\) is even under time-reversal, while
the spin-vector channels \(p=1\) acquire an additional minus sign.

\subsection{Multipole decomposition of the density matrix}
\label{app:multipole_decomp}

\noindent
For an operator $O\in\mathrm{End}(\mathcal{H})$ and a basis $B_i$ orthonormal with respect to the inner product on $\mathrm{End}(\mathcal{H})$, the operator can be expressed in that basis as
\begin{equation}
    O = \sum_i \langle B_i, O \rangle B_i,
\end{equation}
where the Hilbert-Schmidt inner product is typically used:
\begin{equation}
    \langle B_i, O \rangle = \Tr\left[ B_i^\dagger O\right]
\end{equation}
We can thus decompose a density matrix $\rho\in\mathrm{End}(\mathcal{H})$ into its multipole moments:
\begin{equation}
    w^{\nu kpr,\mathrm{real}}_{l_1,l_2,t} = \Tr\left[W^{\nu kpr,\mathrm{real}}_{l_1,l_2,t} \rho\right],
    \label{eq:multipole_moments}
\end{equation}
where we used that the multipole operator $W^{kpr,\mathrm{real},\pm,\nu}_t(l',l)$ is Hermitian, and distinguish the multipole operator from its expectation value, i.e. its multipole moment, via upper- and lowercase letters.\\
Note, however, that the multipole operators as defined in \eqref{eq:timesymmetrized_multipoles} are orthogonal rather than orthonormal. A true representation within the multipole basis would be given by

\begin{equation}
\begin{aligned}
\rho = \sum_{l_1\leq l_2,k,p,r,t,\nu} &2^{\delta_{l\ne l'}}\frac{1}{2}|n_{kpr}|^2(2r+1)n_{l_1kl_2}^2(2k+1) \\ & \times w^{\nu kpr,\mathrm{real}}_{l_1,l_2,t}W^{\nu kpr,\mathrm{real}}_{l_1,l_2,t},
\end{aligned}
\end{equation}
where the factor of $2$ for $l\neq l'$ comes from the projection onto Hermitian components.

\subsection{Alternative time-symmetrization}
\label{app:alternative_timesymm}
The transformation of the spherical tensor components as described in section \ref{subsec:real} is a complex transformation, for which we will use the following shorthand
\begin{equation}
T_\alpha = \sum_{t} U_{\alpha, t} T_t
\end{equation}
As discussed in the appendix \ref{app:timereversal}, the time-reversal operator is an antiunitary endomorphism $\Theta$. With $U$ being a complex matrix, time-reversal $\tau(\cdot)$ and $U$ do not commute. 
\begin{equation*}
    \langle T_\alpha \rangle = \Tr[(T_{\alpha})^{\nu}\rho].
\end{equation*}
Using claim \ref{claim:hermitians_commute_in_trace}, we can swap the projection $(\cdot)^{\nu}$ from the Hermitian operator $T_\alpha$ to the Hermitian density matrix:
\begin{equation}
\Tr[(T_{\alpha})^{\nu}\rho]= \Tr[T_{\alpha}\rho^{\nu}],
\end{equation}
which also allows by linearity of the trace, to pull out the transformation:

\begin{equation}
\langle T_{\alpha} \rangle= \sum_t U_{\alpha,t}\Tr[ T_{t}\rho^{\nu}],
\label{eq:alternative_expcomp}
\end{equation}
with no Hermiticity restriction for the operator $T_{t}$.

This is how the time-reversal even and odd multipoles are computed in multipyles \cite{multipyles}.
The formula for the time-reversal even and odd parts of the density matrix is derived in the appendix \ref{app:timerev_densitymatrix}.

\subsection{Relation between different spherical multipole moments across different orbital blocks}
\label{app:additional_relation}

\noindent
As shown in the appendix \ref{app:orthogonality_sphericalmultipole} and \ref{app:timereversed_operators}, the spherical multipole operators form an orthogonal basis of the operator space and time-reversal relates different components of different $(l',l)$ blocks.
This relation can be leveraged to give insight into how different components of a density matrix of definite time-reversal symmetry relate to each other.\\
The expectation value of the spherical multipole operator for a given density matrix of definite time-reversal symmetry $\nu$, i.e. the spherical multipole moment, is computed as
\begin{equation}
    \Tr \left[ W^{kpr}_t(l',l) \rho^\nu \right],
\end{equation}
as in eq.~\eqref{eq:alternative_expcomp} for the computation of the real Hermitian multipole moments.

With $\tau(\rho^\nu)=(-1)^\nu\rho^\nu$, and claim \ref{claim:trace_tau}, we have
\begin{equation}
    \overline{\Tr \left[ W^{kpr}_t(l',l) \rho^\nu \right]} = (-1)^\nu \Tr \left[ \tau\left(W^{kpr}_t(l',l)\right) \rho^\nu \right].
\end{equation}
Using the time-reversal relation \eqref{eq:timereversed_multipole_tensor}, we get
\begin{equation}
    \overline{\Tr \left[ W^{kpr}_t(l',l) \rho^\nu \right]} = (-1)^{l+l'+k+p+\nu+t}  \Tr \left[ W^{kpr}_{-t}(l',l) \rho^\nu \right].
\end{equation}

Another way to compute the complex conjugate of an expectation value is through Hermitian conjugation $\overline{\Tr[O]}=\Tr[O^\dagger]$, which results in
\begin{equation}
    \overline{\Tr \left[ W^{kpr}_t(l',l) \rho^\nu \right]} = \Tr \left[ \left(W^{kpr}_t(l',l)\right)^\dagger \rho^\nu \right],
\end{equation}
where we cyclically permuted the arguments of the trace and used Hermiticity of $\rho^\nu$ [$\left(\Theta \rho\Theta^\dagger\right)^\dagger = (\Theta^\dagger)^\dagger \rho^\dagger \Theta^\dagger = \Theta \rho \Theta^\dagger\implies (\rho^\nu)^\dagger=\rho^\nu$].

Using the Hermitian adjoint of the spherical multipole tensor in eq. \eqref{eq:hermconj_multipoletensor}, we obtain 
\begin{equation}
    \overline{\Tr \left[ W^{kpr}_t(l',l) \rho^\nu \right]} = (-1)^{l-l'+t}\Tr \left[ W^{kpr}_{-t}(l,l') \rho^\nu \right],
\end{equation}
and together with complex conjugation via time-reversal, see claim \ref{claim:trace_theta}, we obtain a relation between the multipole moments of opposite $(l',l)$ blocks
\begin{equation}
    \Tr \left[ W^{kpr}_{t}(l',l) \rho^\nu \right] = (-1)^{k+p+\nu}\Tr \left[ W^{kpr}_{t}(l,l') \rho^\nu \right].
\end{equation}
For $l'=l$ this relation reduces to the statement about the $k+p$ time-reversal symmetry of the multipoles.
For $l'\neq l$, however, it gives a relation independent of global phases on the normalizations and coupling constants.

Using this relation on the expectation values of the real Hermitian multipole operators via eq. \eqref{eq:alternative_expcomp}, we obtain again the selection rule as laid out in Table \ref{tab:hermitian_time_selection_rule} which we deduced on an operator rather than expectation value level.

\section{Time-Reversal}
\label{app:timereversal}
This introductory discussion closely follows Shapiro’s lecture notes on topological aspects of condensed matter physics, based on lectures by G. M. Graf \cite{shapiro2016topological}.
By Wigner's theorem, any symmetry operation, such as time-reversal, is represented by a unitary or antiunitary operator.
For a time-reversal invariant system, we require that time-reversing a state that evolved in time is the same as evolving a time-reversed state backwards in time,
i.e. for a given propagation operator $U(t)$ and state $|\psi\rangle$, we require that $\Theta U(t)|\psi\rangle = U(-t)\Theta|\psi\rangle$, where $\Theta$ is the time-reversal operator.

For time independent Hamiltonians, we have $U(t) = e^{-iHt}$, and thus $\Theta i H = -i H \Theta$, which
 implies that either $\Theta$ is antiunitary and commutes with $H$, or that $\Theta$ is unitary and anticommutes with $H$.
The former case is the commonly chosen convention, as the latter case would require the spectrum of $H$ to be symmetric around zero, 
which is generally not the case.

Given a Hilbert space $\mathcal{H}$ with states $x$ and $y$, and the inner product denoted by $\langle \cdot, \cdot \rangle$, the definition of an antiunitary operator $A$ is
\begin{equation}
    \langle A x , A y \rangle = \overline{\langle x, y \rangle}
     \label{eq:antiunitarity},
\end{equation}
where the $\overline{\cdot}$ denotes complex conjugation.
We are writing the inner product as $\langle x, y \rangle$ instead of the more common $\langle x|y\rangle$, 
to make it clear which state the operator is acting on. 
The adjoint of an antiunitary operator is commonly defined by 
\begin{equation}
    \langle x, A^\dagger y \rangle = \overline{\langle A x, y \rangle},
     \label{eq:adjoint_antiunitary}
\end{equation}
which together with equation \eqref{eq:antiunitarity} implies the relation $A^\dagger A = \mathbb{I}$.
Conjugate-symmetry of the inner product further implies that $\langle A^\dagger x, y \rangle = \overline{\langle  x,  A y \rangle}$ [$\langle A^\dagger x, y \rangle = \overline{\langle y,A^\dagger x \rangle} = \langle A y, x \rangle = \overline{\langle  x,  A y \rangle}$], 
which together with equation \eqref{eq:antiunitarity} implies $AA^\dagger = \mathbb{I}$.

With 
\begin{equation}
A^\dagger A = AA^\dagger = \mathbb{I}
\end{equation}
we have 
\begin{equation}
A^\dagger = A^{-1}.
\end{equation}

We now want to know how the time-reversal operator acts on linear operators based on given action on the basis elements of the Hilbert spaces.
Assume a linear operator $O:\mathcal{H}\to\mathcal{H}'$, expanded in the basis of the Hilbert spaces:
\begin{equation}
    O = \sum_{i,j} O_{ij} |\varphi_i \rangle \langle \psi_j |
\end{equation}
We define the time-reversal of an operator via the time-reversal of the basis elements, for which the time-reversal is predefined. As shown below in claim \ref{claim:timereversal}, this corresponds to $\Theta O \Theta^\dagger$.

For an operator $O\in \mathrm{End}(\mathcal{H})$ we want to consider the decomposition into time even ($\nu=0$) and time odd ($\nu=1$) parts. We define the operator $P_{\nu}:\mathrm{End}(\mathcal{H})\to\mathrm{End}(\mathcal{H})$:
\begin{equation}
O^{\nu}=P_{\nu}(O) = \frac{1}{2}(O + (-1)^{\nu}\Theta O \Theta^\dagger),
\end{equation}
We further use the shorthand $\tau(O)$ for $\Theta O \Theta^\dagger$.
Note for $O\in\mathrm{Hom}(\mathcal{H'},\mathcal{H})$ we can then consider the extension to $\mathcal O\in\mathrm{End}(\mathcal{H'\oplus H})$, as done with the multipole moments in subsection \ref{sec:decomp_time}.

\begin{claim}
\label{claim:timereversal}
time-reversal of A can be treated as follows:
\begin{equation}
\Theta A \Theta^\dagger = \sum_{i,j} \overline{A_{ij}} \Theta |\varphi_i \rangle \langle \psi_j |\Theta^\dagger = \sum_{i,j} \overline{A_{ij}} |\Theta \varphi_i \rangle \langle \Theta \psi_j |
\end{equation}
i.e. $\Theta |\varphi\rangle \langle \psi | \Theta^\dagger = |\Theta \varphi\rangle \langle \Theta \psi |$.
\end{claim}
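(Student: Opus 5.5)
The plan is to verify the identity $\Theta|\varphi\rangle\langle\psi|\Theta^\dagger = |\Theta\varphi\rangle\langle\Theta\psi|$ by letting both sides act on an arbitrary vector $x\in\mathcal{H}$. The full claim then follows from conjugate-linearity of $\Theta$, using the defining relation \eqref{eq:antiunitarity} and the adjoint definition \eqref{eq:adjoint_antiunitary}.

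\textbf{The rank-one case.} Fix $x\in\mathcal{H}$ and compute the left-hand side in steps.
\begin{enumerate}
\item Since $\Theta$ is bijective with $\Theta^\dagger=\Theta^{-1}$, we may write $x=\Theta z$ with $z=\Theta^\dagger x$.
\item The rank-one operator gives $|\varphi\rangle\langle\psi|\Theta^\dagger x = \langle \psi, \Theta^\dagger x\rangle\,\varphi$.
\item Applying the antilinear $\Theta$ gives $\overline{\langle \psi, \Theta^\dagger x\rangle}\,\Theta\varphi$.
\item By \eqref{eq:antiunitarity}, $\overline{\langle \psi, \Theta^\dagger x\rangle} = \langle \Theta\psi, \Theta\Theta^\dagger x\rangle = \langle \Theta\psi, x\rangle$, where the last step uses $\Theta\Theta^\dagger=\mathbb{I}$, shown just before the claim.
\end{enumerate}
The result is $\langle\Theta\psi,x\rangle\,\Theta\varphi = |\Theta\varphi\rangle\langle\Theta\psi|\,x$. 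Since $x$ was arbitrary, the rank-one identity holds.

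\textbf{General operators.} Write $A=\sum_{i,j}A_{ij}|\varphi_i\rangle\langle\psi_j|$. Conjugating by $\Theta$ gives $\Theta A\Theta^\dagger x=\sum_{i,j}\Theta\big(A_{ij}\,|\varphi_i\rangle\langle\psi_j|\Theta^\dagger x\big)$. Antilinearity of $\Theta$ pulls out $\overline{A_{ij}}$, which gives the middle expression of the claim. Applying the rank-one result termwise then gives the final expression.

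\textbf{Main obstacle.} The only delicate point is bookkeeping with the antiunitary adjoint: one must check that $\Theta^\dagger$ is itself antilinear and inverse to $\Theta$, so that $\Theta A\Theta^\dagger$ is a linear operator and the manipulation in step 4 is legitimate. I would rely on the derivation of $A^\dagger A=AA^\dagger=\mathbb{I}$ given just before the claim.

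For $O\in\mathrm{Hom}(\mathcal{H}',\mathcal{H})$, the same computation applies on $\mathcal{H}'\oplus\mathcal{H}$.
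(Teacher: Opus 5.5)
Your proof is correct and follows essentially the paper's route. Both arguments evaluate the operator on arbitrary vectors, rewrite $\langle\psi,\Theta^\dagger y\rangle$ as $\overline{\langle\Theta\psi,y\rangle}$, and use antilinearity of $\Theta$ to conjugate that coefficient back. The only differences are minor: the paper reads this identity directly off the adjoint definition \eqref{eq:adjoint_antiunitary} rather than deriving it via $\Theta\Theta^\dagger=\mathbb{I}$, and your step 1 is never actually used.
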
 

\begin{proof} 
Let $\varphi,x \in \mathcal{H}'$ and $\psi,y\in\mathcal{H}$. Consider taking a matrix element $\langle x|  \Theta |\varphi\rangle \langle \psi | \Theta^\dagger  |y\rangle$. Resolving the right-most product first (Operators act from the left), using \eqref{eq:antiunitarity}:

\begin{equation}
      \langle \psi | \Theta^\dagger  |y\rangle =  \langle \psi , \Theta^\dagger y\rangle = \overline{\langle \Theta \psi ,  y\rangle} =\overline{\langle \Theta \psi |  y\rangle}
      \label{eq:exp_of_thetadagger}
\end{equation}
which now constitutes a complex coefficient of $|\varphi\rangle$. The time-reversal operator acting on $|\varphi\rangle$ acts on $|\varphi\rangle$ and on its coefficient \eqref{eq:exp_of_thetadagger} via complex conjugation:
\begin{equation}
    \langle x|  \Theta |\varphi\rangle\overline{\langle \Theta \psi |  y\rangle} = \langle x | \Theta \varphi\rangle\langle \Theta \psi |  y\rangle
\end{equation}
Since this hold for any $x,y$ the claim has been shown.
\end{proof}

\begin{claim}
\label{claim:trace_theta}
The trace of a time-reversed operator is the complex conjugate of the trace of the operator:
\begin{equation}
    \Tr[\Theta A \Theta ^\dagger]=\overline{\Tr[A]}
    \label{eq:trace_conj}
\end{equation}
\end{claim}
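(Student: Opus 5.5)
The plan is to reduce the statement to Claim~\ref{claim:timereversal} by expanding $A$ in an orthonormal basis and computing the trace term by term.

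Fix an orthonormal basis $\{|\varphi_i\rangle\}$ of the finite-dimensional space $\mathcal{H}$ and write $A=\sum_{i,j}A_{ij}|\varphi_i\rangle\langle\varphi_j|$. Then $\Tr[A]=\sum_i A_{ii}$.

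By Claim~\ref{claim:timereversal},
\begin{equation*}
\Theta A\Theta^\dagger=\sum_{i,j}\overline{A_{ij}}\,|\Theta\varphi_i\rangle\langle\Theta\varphi_j|.
\end{equation*}
Since $\Theta^\dagger\Theta$ and $\Theta\Theta^\dagger$ are the identity on the finite-dimensional space, the right-hand side is a linear operator, so its trace is well defined. By linearity of the trace it equals
\begin{equation*}
\sum_{i,j}\overline{A_{ij}}\,\langle\Theta\varphi_j|\Theta\varphi_i\rangle .
\end{equation*}

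Next I will use antiunitarity, eq.~\eqref{eq:antiunitarity}:
\begin{equation*}
\langle\Theta\varphi_j,\Theta\varphi_i\rangle=\overline{\langle\varphi_j,\varphi_i\rangle}=\delta_{ij}.
\end{equation*}
Substituting this, the double sum collapses to $\sum_i\overline{A_{ii}}=\overline{\Tr[A]}$, which is eq.~\eqref{eq:trace_conj}.

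An equivalent route is to note that $\{\Theta\varphi_i\}$ is again an orthonormal basis, because $\Theta$ is bijective with $\Theta^{-1}=\Theta^\dagger$. One then evaluates the trace in that basis via $\langle\Theta\varphi_i|\Theta A\Theta^\dagger|\Theta\varphi_i\rangle$ and uses $\Theta^\dagger\Theta=\mathbb{I}$ together with eq.~\eqref{eq:antiunitarity}.

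The main subtlety is bookkeeping with antilinear maps. The expression $\Theta A\Theta^\dagger$ is linear, but complex scalars must not be pulled through $\Theta$ carelessly. Relying on Claim~\ref{claim:timereversal}, which already handles the conjugation of coefficients, avoids this issue, so the remaining steps are routine.
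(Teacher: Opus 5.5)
Your proof is correct. Your ``equivalent route'' at the end is exactly the paper's proof. The paper shows that $\{\Theta e_i\}$ is again an orthonormal basis, checking orthonormality via eq.~\eqref{eq:antiunitarity} and completeness explicitly via $x=\Theta\Theta^\dagger x$. It then invokes basis independence of the trace and evaluates $\langle \Theta x|\Theta A\Theta^\dagger|\Theta y\rangle=\langle\Theta x,\Theta A y\rangle=\overline{\langle x|A|y\rangle}$.

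Your main route is different. You first use Claim~\ref{claim:timereversal} to write $\Theta A\Theta^\dagger=\sum_{i,j}\overline{A_{ij}}\,|\Theta\varphi_i\rangle\langle\Theta\varphi_j|$. You then apply $\Tr\bigl[|u\rangle\langle v|\bigr]=\langle v|u\rangle$, which holds for arbitrary vectors. This gains you something small: you only need the vectors $\Theta\varphi_i$ to be orthonormal, not to span $\mathcal{H}$, so the paper's completeness step is unnecessary. The price is that you rely on Claim~\ref{claim:timereversal} to handle the conjugation of coefficients. The paper's argument is self-contained, using only antiunitarity and $\Theta^\dagger\Theta=\Theta\Theta^\dagger=\mathbb{I}$.

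One minor remark: you do not need $\Theta^\dagger\Theta=\Theta\Theta^\dagger=\mathbb{I}$ to make the right-hand side linear. It is a finite sum of rank-one linear operators, and $\Theta A\Theta^\dagger$ is linear anyway as a composition of two antilinear maps with a linear one.
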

\begin{proof}
 Consider a single summand $\langle x | A |y\rangle$. For a given orthonormal basis of a Hilbert space, time-reversing each element yields again a basis (orthonormality is preserved by antiunitarity defined in equation \eqref{eq:antiunitarity}, and completeness: any $x=\Theta \Theta^\dagger x=\Theta \sum \langle \Theta^\dagger x, e_i\rangle e_i=\sum \overline{\langle \Theta^\dagger x, e_i\rangle}\Theta e_i=\sum \langle x, \Theta e_i\rangle \Theta e_i$ is spanned by $\{\Theta e_i\}_i$).
Since the trace is basis independent, we can take the trace of $\Theta A \Theta^\dagger$ in the time-reversed basis, which for each summand gives:
\begin{equation}
\begin{aligned}
    \langle \Theta x | \Theta A \Theta^\dagger | \Theta y \rangle &= \langle \Theta x , \Theta A \Theta^\dagger  \Theta y \rangle \\
    &= \langle \Theta x , \Theta Ay \rangle \\
    & = \overline{\langle x ,  Ay \rangle}= \overline{\langle x |  A |y \rangle}.
\end{aligned}
\end{equation}
\end{proof}

\begin{claim}
\label{claim:trace_tau}
Given two operators $A,B\in\mathrm{End}(\mathcal{H})$:
    \begin{equation}
    \Tr[\tau(A) B]=\overline{\Tr[A \tau(B)]}
\end{equation}
\end{claim}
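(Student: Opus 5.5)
The plan is to reduce the statement to Claim~\ref{claim:trace_theta} by turning $\tau(A)B$ into the time-reversal of a single operator. Since $\Theta$ is antiunitary with $\Theta^\dagger\Theta=\Theta\Theta^\dagger=\mathbb{I}$, as established above, I insert the identity between $\tau(A)$ and $B$:
\begin{equation}
\tau(A)B=\Theta A\Theta^\dagger B=\Theta A\Theta^\dagger\,\Theta\Theta^\dagger\, B\,\Theta\Theta^\dagger .
\end{equation}
Writing $\tau^{-1}(B)=\Theta^\dagger B\Theta$ for the inverse map, this becomes
\begin{equation}
\tau(A)B=\Theta\,\bigl(A\,\tau^{-1}(B)\bigr)\,\Theta^\dagger=\tau\bigl(A\,\tau^{-1}(B)\bigr).
\end{equation}
Taking the trace and applying Claim~\ref{claim:trace_theta} gives
\begin{equation}
\Tr[\tau(A)B]=\overline{\Tr[A\,\tau^{-1}(B)]} .
\end{equation}

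It remains to replace $\tau^{-1}$ by $\tau$. For spin-$\frac12$ (indeed for any half-integer total), $\Theta^2=-\mathbb{I}$, as follows from $\Theta=i\sigma_yK$ on the spin factor and $\Theta^2=+\mathbb{I}$ on the orbital factor with the Condon--Shortley action $\Theta|l,m\rangle=(-1)^m|l,-m\rangle$. In general $\Theta^2=c\,\mathbb{I}$ with $|c|=1$. Then $\Theta^\dagger=\Theta^{-1}=c^{-1}\Theta$, and hence
\begin{equation}
\tau^{-1}(B)=\Theta^{-1}B\Theta=(c^{-1}\Theta)\,B\,(c\,\Theta^{-1}) .
\end{equation}

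This is the one delicate point. Because $\Theta$ is antilinear, $\Theta B\, c=\bar c\,\Theta B$, so one must track where the scalars end up:
\begin{equation}
c^{-1}\Theta B\, c\,\Theta^{-1}=c^{-1}\bar c\,\Theta B\Theta^{-1}=\bar c^{\,2}\,\tau(B)=\tau(B),
\end{equation}
using $|c|=1$ and $c=\pm1$, which holds since $\Theta^2$ commutes with $\Theta$ and must therefore be real. Thus $\tau^{-1}=\tau$ as maps on $\mathrm{End}(\mathcal{H})$, and the claim follows.

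As a cross-check, and as a fallback if the scalar bookkeeping proves awkward, I would verify the statement directly in an orthonormal basis $\{e_i\}$ using Claim~\ref{claim:timereversal}. With $\tau(A)=\sum_{ij}\overline{A_{ij}}\,|\Theta e_i\rangle\langle\Theta e_j|$ and the analogous expansion of $\tau(B)$, both traces reduce to sums of products of matrix elements. Antiunitarity, $\langle\Theta x,\Theta y\rangle=\overline{\langle x,y\rangle}$, then gives
\begin{equation}
\Tr[\tau(A)B]=\sum_{ij}\overline{A_{ij}}\,\langle\Theta e_j|B|\Theta e_i\rangle
\quad\text{and}\quad
\Tr[A\tau(B)]=\sum_{ij}A_{ij}\,\langle e_j|\tau(B)|e_i\rangle .
\end{equation}
Evaluating $\tau(B)$ in the time-reversed basis via Claim~\ref{claim:trace_theta} and comparing the two sums term by term shows they are complex conjugates of each other.
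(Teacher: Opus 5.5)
Your proof is correct and essentially the paper's: both use Claim~\ref{claim:trace_theta} together with the fact that $\Theta^2=\pm\mathbb{I}$ makes $\tau$ an involution. The paper applies Claim~\ref{claim:trace_theta} to $A\tau(B)$ and uses $\tau(\tau(B))=B$, while you apply it to $A\tau^{-1}(B)$ and use $\tau^{-1}=\tau$; your scalar bookkeeping could also be shortened, since $\tau^2(B)=\Theta^2B\Theta^{-2}=B$ holds as soon as $\Theta^2$ is a scalar multiple of the identity.
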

\begin{proof}
Let $\Theta^2=\epsilon\,1$, with $\epsilon=\pm1$. 
\begin{equation}
\Theta \Theta^\dagger=1 \implies \Theta \Theta \Theta^\dagger \Theta^\dagger=1\implies \epsilon\Theta^\dagger \Theta^\dagger=1, 
\end{equation}
and we get $(\Theta^\dagger)^2=\epsilon\,1$. We thus get 
\begin{equation}
\begin{aligned}
    \overline{\Tr[A\Theta B \Theta^\dagger]}&=\Tr[\Theta A\Theta B \Theta^\dagger\Theta^\dagger]\\
    &=\epsilon\Tr[\Theta A\Theta B]\\
    &=\epsilon\Tr[\Theta A\Theta  \Theta \Theta^\dagger B]\\
    &=\Tr[\Theta A \Theta^\dagger B],
\end{aligned}
\end{equation}
where we used claim \ref{claim:trace_theta} in the first equality. 
\end{proof}

\begin{claim}
\label{claim:trace_tau_herm}
Given two Hermitian operators $A,B\in\mathrm{End}(\mathcal{H})$:
    \begin{equation}
    \Tr[\tau(A) B]=\Tr[A \tau(B)]
\end{equation}
\end{claim}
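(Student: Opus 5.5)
The plan is to derive the claim directly from Claim~\ref{claim:trace_tau}, using Hermiticity only at the last step. By Claim~\ref{claim:trace_tau}, $\Tr[\tau(A)B]=\overline{\Tr[A\tau(B)]}$. It therefore suffices to show that $\Tr[A\tau(B)]$ is real whenever $A$ and $B$ are Hermitian.

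First I will show that $\tau$ preserves Hermiticity. The paper already records this for the density matrix in appendix~\ref{app:additional_relation}, and the argument holds for any Hermitian operator:
\[
\bigl(\Theta B\Theta^\dagger\bigr)^\dagger=(\Theta^\dagger)^\dagger B^\dagger\Theta^\dagger=\Theta B\Theta^\dagger ,
\]
using $B^\dagger=B$. The only point to check is that $(\Theta^\dagger)^\dagger=\Theta$ and that the adjoint reverses products for antiunitary factors. Both follow from the definition~\eqref{eq:adjoint_antiunitary} together with $\Theta^\dagger=\Theta^{-1}$. Note that the composite $\Theta B\Theta^\dagger$ is linear, so its adjoint is the usual one.

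Next I will use that the trace of a product of two Hermitian operators is real. With $C=\tau(B)$ Hermitian,
\[
\overline{\Tr[AC]}=\Tr[(AC)^\dagger]=\Tr[C^\dagger A^\dagger]=\Tr[CA]=\Tr[AC],
\]
by cyclicity of the trace. Combining the two steps gives
\[
\Tr[\tau(A)B]=\overline{\Tr[A\tau(B)]}=\Tr[A\tau(B)],
\]
which is the claim.

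The main obstacle is the first step, handling adjoints of products that contain the antilinear $\Theta$. If that manipulation looks shaky, I would instead use Claim~\ref{claim:timereversal} on the spectral decomposition $B=\sum_i b_i|\varphi_i\rangle\langle\varphi_i|$ with real $b_i$. This gives $\tau(B)=\sum_i b_i|\Theta\varphi_i\rangle\langle\Theta\varphi_i|$, which is manifestly Hermitian.
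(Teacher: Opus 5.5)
Your proof is correct and is essentially the paper's argument: show that $\tau(B)$ is Hermitian, apply Claim~\ref{claim:trace_tau}, and use that $\Tr[A\tau(B)]$ is real when $A$ and $\tau(B)$ are Hermitian. The only difference is cosmetic: the paper checks Hermiticity of $\tau(B)$ by a direct inner-product computation from antiunitarity, whereas you use the adjoint-of-product identity (or, as a fallback, the spectral decomposition with Claim~\ref{claim:timereversal}), and both routes are valid.
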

\begin{proof}
As $A$ and $B$ are Hermitian, so are $\tau(A)$ and $\tau(B)$:
\begin{equation}
\begin{aligned}
    \langle \Theta A \Theta^\dagger x, y\rangle &=\overline{\langle A \Theta^\dagger x, \Theta^\dagger y\rangle}\\
    &=\overline{\langle \Theta^\dagger x,A \Theta^\dagger y\rangle}\\
    &=\langle x, \Theta A\Theta^\dagger y\rangle
\end{aligned}
\label{eq:tauAishermitian}
\end{equation}
where in the first and last equality antiunitarity of $\Theta$ and $\Theta^\dagger$ were used, see equations \eqref{eq:antiunitarity} and \eqref{eq:adjoint_antiunitary}, and the Hermiticity of $A$ in the second equality.

Claim \ref{claim:trace_tau} gives:
\begin{equation}
\begin{aligned}
    \Tr[\tau(A) B] &=\overline{\Tr[A \tau(B)]}
\end{aligned}
\end{equation}
With $A$ and $\tau(B)$ Hermitian, we have 
\begin{equation}
\begin{aligned}
    \overline{\Tr[A \tau(B)]} &= \Tr\left[(A\tau(B))^\dagger \right]\\
    &=\Tr[\tau(B)A]
    =\Tr[A\tau(B)],
\end{aligned}
\end{equation}
where we have used cyclicity of the trace in the last step.
\end{proof}

\begin{claim}
\label{claim:hermitians_commute_in_trace}
For Hermitian operators $A$ and $B$:
\begin{equation}
\Tr[A^\nu B] = \Tr[A B^{\nu}]
\end{equation}
\end{claim}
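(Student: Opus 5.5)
The plan is to expand both sides by the definition of the projection, $A^\nu=\frac12\bigl(A+(-1)^\nu\tau(A)\bigr)$, and then use linearity of the trace. This gives
\begin{equation*}
\Tr[A^\nu B]=\tfrac12\Tr[AB]+\tfrac{(-1)^\nu}{2}\Tr[\tau(A)B],
\qquad
\Tr[AB^\nu]=\tfrac12\Tr[AB]+\tfrac{(-1)^\nu}{2}\Tr[A\tau(B)].
\end{equation*}
The first terms on each side are identical. The claim therefore reduces to showing that the two cross terms coincide, $\Tr[\tau(A)B]=\Tr[A\tau(B)]$, for Hermitian $A$ and $B$.

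This identity is exactly Claim~\ref{claim:trace_tau_herm}, which I take as already established. For completeness, its content is as follows. Claim~\ref{claim:trace_tau} gives $\Tr[\tau(A)B]=\overline{\Tr[A\tau(B)]}$ using $\Theta^2=\epsilon\,\mathbb{I}$ and Claim~\ref{claim:trace_theta}. Since $A$ and $\tau(B)$ are Hermitian (the Hermiticity of $\tau(B)$ is shown in eq.~\eqref{eq:tauAishermitian}), we have $\overline{\Tr[A\tau(B)]}=\Tr[(A\tau(B))^\dagger]=\Tr[\tau(B)A]=\Tr[A\tau(B)]$ by cyclicity. Substituting this back into the expansion gives $\Tr[A^\nu B]=\Tr[AB^\nu]$ for both $\nu=0$ and $\nu=1$.

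There is essentially no obstacle here. The only point needing care is that $\Theta$ is antiunitary, so the trace manipulations must go through Claims~\ref{claim:trace_theta}--\ref{claim:trace_tau_herm} rather than naive cyclicity with $\Theta$. Those claims are already proven, so the proof is a two-line linearity argument.
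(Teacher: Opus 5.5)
Your proof is correct and matches the paper's argument essentially line by line. Like the paper, you expand $A^\nu$ and $B^\nu$ by definition, use linearity of the trace, and exchange the cross terms via $\Tr[\tau(A)B]=\Tr[A\tau(B)]$ from Claim~\ref{claim:trace_tau_herm}.
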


\begin{proof}

Using the definitions, linearity of the trace and claim \ref{claim:trace_tau_herm} we get:
\begin{equation}
\begin{aligned}
    \Tr[A^\nu B] &= \Tr[\frac{1}{2}(A+(-1)^{\nu}\tau(A))B]\\
    &= \frac{1}{2}(\Tr[AB]+(-1)^{\nu}\Tr[\tau(A)B])\\
    &= \frac{1}{2}(\Tr[AB]+(-1)^{\nu}\Tr[A\tau(B)])\\
    &= \Tr[A\frac{1}{2}(B+(-1)^{\nu}\tau(B))] = \Tr[A B^{\nu}]
\end{aligned}
\end{equation}
\end{proof}

\begin{claim}
$P_{\nu}$ is a projection, i.e. $P_{\nu}P_{\nu}=P_{\nu}$
\end{claim}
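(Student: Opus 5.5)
The plan is to substitute the definition of $P_\nu$ into itself and reduce $P_\nu P_\nu$ to $P_\nu$, using one fact: $\tau$ is an involution on $\mathrm{End}(\mathcal{H})$. By this I mean $\tau(\tau(O))=O$ for all $O$, and $\tau$ is additive and real-linear. Additivity and real scalars passing through are immediate from $\tau(O)=\Theta O\Theta^\dagger$. A real scalar $c$ commutes with the antiunitary $\Theta$, so $\tau(cO)=c\,\tau(O)$, and the coefficients $\frac12$ and $(-1)^\nu$ in $P_\nu$ are real. Hence $P_\nu$ is a real-linear map. That is enough here, since in the paper it is applied to operators and the coefficients involved are real.

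The key step is the involution property. From the proof of Claim~\ref{claim:trace_tau} we have $\Theta^2=\epsilon\,\mathbb{I}$ and $(\Theta^\dagger)^2=\epsilon\,\mathbb{I}$ with $\epsilon=\pm1$. Then
\begin{equation*}
\tau(\tau(O))=\Theta\Theta O\Theta^\dagger\Theta^\dagger=\epsilon\,O\,\epsilon=\epsilon^2 O=O .
\end{equation*}
The two factors of $\epsilon$ are real, so they can be moved freely. This holds for the spin-$\frac12$ case, where $\epsilon=-1$, as well as for $\epsilon=+1$.

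With this, the computation is direct:
\begin{equation*}
P_\nu(P_\nu(O))=\tfrac12\Big(\tfrac12\big(O+(-1)^\nu\tau(O)\big)+(-1)^\nu\,\tfrac12\big(\tau(O)+(-1)^\nu\tau(\tau(O))\big)\Big).
\end{equation*}
Using $\tau(\tau(O))=O$ and $(-1)^{2\nu}=1$, the bracket becomes $O+(-1)^\nu\tau(O)$, so the whole expression equals $P_\nu(O)$.

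As a byproduct, the same computation gives $P_0P_1=P_1P_0=0$ and $P_0+P_1=\mathrm{id}$. These justify the even/odd splitting used in Sec.~\ref{sec:decomp_time}.

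The only point needing care is the involution step, specifically confirming that $\Theta^2$ is a real multiple of the identity so that it commutes past $O$. If one only assumes $\Theta^2=c\,\mathbb{I}$ with $|c|=1$, I would first show that $c$ is real, via $\Theta c\,\mathbb{I}=\Theta^3=c\,\mathbb{I}\,\Theta$, and antiunitarity gives $\Theta c=\bar c\,\Theta$. This recovers the paper's $\epsilon=\pm1$.
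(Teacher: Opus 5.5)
Your proof is correct and is essentially the paper's argument. Both rest on $\tau(\tau(O))=O$, obtained from $\Theta^2=(\Theta^\dagger)^2=\epsilon\,\mathbb{I}$: the paper first proves $\Theta A^\nu\Theta^\dagger=(-1)^\nu A^\nu$ and then applies $P_\nu$ once more, while you expand $P_\nu(P_\nu(O))$ directly and state explicitly the real-linearity of $\tau$ that the paper uses without comment. The step you flag as delicate is harmless either way, since $(\Theta^\dagger)^2=(\Theta^2)^{-1}$ gives $\tau(\tau(O))=c\,O\,c^{-1}=O$ for any nonzero scalar $c$, so $c$ need not be real.
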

\begin{proof}
First note, that $\Theta A^\nu\Theta^\dagger=(-1)^{\nu}A^\nu$:
\begin{equation}
\begin{aligned}
\Theta A^\nu\Theta^\dagger &= \frac{1}{2}(\Theta A\Theta^\dagger + (-1)^{\nu}\Theta\Theta A\Theta^\dagger\Theta^\dagger)\\
&=(-1)^{\nu}\frac{1}{2}((-1)^{\nu}\Theta A\Theta^\dagger +  A)\\
&=(-1)^{\nu}A^\nu,
\end{aligned}
\end{equation}
where we pulled out the factor $(-1)^{\nu}$ and used $\Theta\Theta=\Theta^\dagger\Theta^\dagger=\epsilon\,1= \pm1$, which was shown in the proof for claim \ref{claim:trace_tau}.
Hence we get
\begin{equation}
\begin{aligned}
P_{\nu}P_{\nu}A=P_{\nu}A^\nu &=\frac{1}{2}(A^\nu+(-1)^{\nu}\Theta A^\nu\Theta^\dagger)\\
&=\frac{1}{2}(A^\nu+((-1)^{\nu})^2A^\nu) = A^\nu = P_{\nu}A,
\end{aligned}
\end{equation}
i.e. $P_{\nu}$ is a projection on the real operator space of Hermitian operators.
\end{proof}

\vspace{3cm}
\bibliographystyle{apsrev4-2}
\bibliography{references}

@article{Bultmark2009,
  author  = {Fredrik Bultmark and Francesco Cricchio and Oscar Gr{\aa}n{\"a}s and Lars Nordstr{\"o}m},
  title   = {Multipole decomposition of {LDA} + {U} energy and its application to actinide compounds},
  journal = {Phys. Rev. B},
  volume  = {80},
  number  = {3},
  pages   = {035121},
  year    = {2009},
  doi     = {10.1103/PhysRevB.80.035121}
}

@misc{multipyles,
author = {Merkel, M. E.},
note = {{\tt multipyles} v1.1.0 (Zenodo, 2023)
doi:10.5281/zenodo.8199391}
}

@book{sakurai,
  author    = {J. J. Sakurai and Jim Napolitano},
  title     = {Modern Quantum Mechanics},
  edition   = {2},
  publisher = {Cambridge University Press},
  address   = {Cambridge},
  year      = {2017},
  isbn      = {9781108422413}
}

@book{dresselhaus,
  author    = {Dresselhaus, Mildred S. and Dresselhaus, Gene and Jorio, Ado},
  title     = {Group Theory: Application to the Physics of Condensed Matter},
  publisher = {Springer},
  address   = {Berlin Heidelberg},
  year      = {2010},
  isbn      = {978-3-642-06945-1}
}

@book{Edmonds1957,
  author    = {Edmonds, A. R.},
  title     = {Angular Momentum in Quantum Mechanics},
  series    = {Investigations in Physics},
  number    = {4},
  publisher = {Princeton University Press},
  address   = {Princeton, NJ},
  year      = {1957}
}

@misc{berkeley_timereversal,
  author = {Littlejohn, R. G.},
  title = {Time Reversal},
  year  = {1996--1997},
  url   = {https://bohr.physics.berkeley.edu/classes/221/notes/timerev.pdf},
  note  = {Physics 221 Lecture Notes, University of California, Berkeley}
}

@article{Blochl1994PAW,
  author  = {Bl{\"o}chl, P. E.},
  title   = {Projector augmented-wave method},
  journal = {Phys. Rev. B},
  volume  = {50},
  number  = {24},
  pages   = {17953--17979},
  year    = {1994},
  doi     = {10.1103/PhysRevB.50.17953}
}

@article{KresseJoubert1999PAW,
  author  = {Kresse, G. and Joubert, D.},
  title   = {From ultrasoft pseudopotentials to the projector augmented-wave method},
  journal = {Phys. Rev. B},
  volume  = {59},
  number  = {3},
  pages   = {1758--1775},
  year    = {1999},
  doi     = {10.1103/PhysRevB.59.1758}
}

@article{Duros_2025,
  author  = {Duros, O and Juhin, A and Elnaggar, H and Chiuzbăian, G S and Brouder, C},
  title   = {General expressions for {Stevens} and {Racah} operator equivalents},
  journal = {J. Phys. A: Math. Theor.},
  volume  = {58},
  number  = {2},
  pages   = {025207},
  year    = {2024},
  month   = dec,
  doi     = {10.1088/1751-8121/ad96fc}
}

@article{Racah1942,
  author  = {Racah, Giulio},
  title   = {Theory of Complex Spectra. II},
  journal = {Phys. Rev.},
  volume  = {62},
  number  = {9-10},
  pages   = {438--462},
  year    = {1942},
  month   = nov,
  doi     = {10.1103/PhysRev.62.438}
}

@article{vanderLaan1995,
  author  = {van der Laan, Gerrit and Thole, B. T.},
  title   = {Core hole polarization in resonant photoemission},
  journal = {J. Phys.: Condens. Matter},
  volume  = {7},
  number  = {50},
  pages   = {9947},
  year    = {1995},
  month   = dec,
  doi     = {10.1088/0953-8984/7/50/028}
}

@article{BLANCO1997,
  author  = {Miguel A. Blanco and M. Flórez and M. Bermejo},
  title   = {Evaluation of the rotation matrices in the basis of real spherical harmonics},
  journal = {J. Mol. Struct.: THEOCHEM},
  volume  = {419},
  number  = {1},
  pages   = {19--27},
  year    = {1997},
  doi     = {10.1016/S0166-1280(97)00185-1}
}

@book{varshalovich1988,
  author    = {Varshalovich, D. A. and Moskalev, A. N. and Khersonskii, V. K.},
  title     = {Quantum Theory of Angular Momentum},
  publisher = {World Scientific},
  address   = {Singapore},
  year      = {1988},
  isbn      = {9789971501075}
}

@article{Spaldin2008,
  author  = {Spaldin, Nicola A and Fiebig, Manfred and Mostovoy, Maxim},
  title   = {The toroidal moment in condensed-matter physics and its relation to the magnetoelectric
effect*},
  journal = {J. Phys.: Condens. Matter},
  volume  = {20},
  number  = {43},
  pages   = {434203},
  year    = {2008},
  month   = oct,
  doi     = {10.1088/0953-8984/20/43/434203}
}

@article{Hayami2018PRB,
  author  = {Hayami, Satoru and Yatsushiro, Megumi and Yanagi, Yuki and Kusunose, Hiroaki},
  title   = {Classification of atomic-scale multipoles under crystallographic point groups and application to linear response tensors},
  journal = {Phys. Rev. B},
  volume  = {98},
  number  = {16},
  pages   = {165110},
  year    = {2018},
  month   = oct,
  doi     = {10.1103/physrevb.98.165110}
}

@article{Hayami2018,
  author  = {Hayami, Satoru and Kusunose, Hiroaki},
  title   = {Microscopic Description of Electric and Magnetic Toroidal Multipoles in Hybrid Orbitals},
  journal = {J. Phys. Soc. Jpn.},
  volume  = {87},
  number  = {3},
  pages   = {033709},
  year    = {2018},
  month   = mar,
  doi     = {10.7566/jpsj.87.033709}
}

@article{Kusunose2020,
  author  = {Kusunose, Hiroaki and Oiwa, Rikuto and Hayami, Satoru},
  title   = {Complete Multipole Basis Set for Single-Centered Electron Systems},
  journal = {J. Phys. Soc. Jpn.},
  volume  = {89},
  number  = {10},
  pages   = {104704},
  year    = {2020},
  doi     = {10.7566/JPSJ.89.104704}
}

@article{Yatsushiro2021,
  author  = {Yatsushiro, Megumi and Kusunose, Hiroaki and Hayami, Satoru},
  title   = {Multipole classification in 122 magnetic point groups for unified understanding of multiferroic responses and transport phenomena},
  journal = {Phys. Rev. B},
  volume  = {104},
  number  = {5},
  pages   = {054412},
  year    = {2021},
  month   = aug,
  doi     = {10.1103/physrevb.104.054412}
}

@article{Ederer2007,
  author  = {Ederer, Claude and Spaldin, Nicola A.},
  title   = {Towards a microscopic theory of toroidal moments in bulk periodic crystals},
  journal = {Phys. Rev. B},
  volume  = {76},
  number  = {21},
  pages   = {214404},
  year    = {2007},
  month   = dec,
  doi     = {10.1103/physrevb.76.214404}
}

@article{Inda2024,
  author  = {Inda, A. and Oiwa, R. and Hayami, S. and Yamamoto, H. M. and Kusunose, H.},
  title   = {Quantification of chirality based on electric toroidal monopole},
  journal = {J. Chem. Phys.},
  volume  = {160},
  number  = {18},
  pages   = {184117},
  year    = {2024},
  month   = may,
  doi     = {10.1063/5.0204254}
}

@article{Oiwa2025PRR,
  author  = {Oiwa, Rikuto and Kusunose, Hiroaki},
  title   = {Predominant electronic order parameter for structural chirality: Role of spinless electronic toroidal multipoles in {Te} and {Se}},
  journal = {Phys. Rev. Res.},
  volume  = {7},
  number  = {3},
  pages   = {033250},
  year    = {2025},
  month   = sep,
  doi     = {10.1103/1zq8-pqh8}
}

@article{Hayami2024PSJ,
  author  = {Hayami, Satoru and Kusunose, Hiroaki},
  title   = {Unified Description of Electronic Orderings and Cross Correlations by Complete Multipole Representation},
  journal = {J. Phys. Soc. Jpn.},
  volume  = {93},
  number  = {7},
  pages   = {072001},
  year    = {2024},
  doi     = {10.7566/JPSJ.93.072001}
}

@article{Furukawa2017,
  author  = {Furukawa, Tetsuya and Shimokawa, Yuri and Kobayashi, Kaya and Itou, Tetsuaki},
  title   = {Observation of current-induced bulk magnetization in elemental tellurium},
  journal = {Nat. Commun.},
  volume  = {8},
  number  = {1},
  pages   = {954},
  year    = {2017},
  doi     = {10.1038/s41467-017-01093-3}
}

@article{Calavalle2022,
  author  = {Calavalle, Francesco and Suárez-Rodríguez, Manuel and Martín-García, Beatriz and Johansson, Annika and Vaz, Diogo C. and Yang, Haozhe and Maznichenko, Igor V. and Ostanin, Sergey and Mateo-Alonso, Aurelio and Chuvilin, Andrey and Mertig, Ingrid and Gobbi, Marco and Casanova, Fèlix and Hueso, Luis E.},
  title   = {Gate-tuneable and chirality-dependent charge-to-spin conversion in tellurium nanowires},
  journal = {Nat. Mater.},
  volume  = {21},
  number  = {5},
  pages   = {526--532},
  year    = {2022},
  month   = mar,
  doi     = {10.1038/s41563-022-01211-7}
}

@article{Roy2022,
  author  = {Roy, Arunesh and Cerasoli, Frank T. and Jayaraj, Anooja and Tenzin, Karma and Nardelli, Marco Buongiorno and Sławińska, Jagoda},
  title   = {Long-range current-induced spin accumulation in chiral crystals},
  journal = {npj Comput. Mater.},
  volume  = {8},
  number  = {1},
  pages   = {243},
  year    = {2022},
  month   = nov,
  doi     = {10.1038/s41524-022-00931-3}
}

@book{BrinkSatchler1968,
  author    = {Brink, D. M. and Satchler, G. R.},
  title     = {Angular Momentum},
  publisher = {Clarendon Press},
  address   = {Oxford},
  year      = {1968},
  pages     = {160}
}

@book{Wigner1959,
  author     = {Wigner, Eugene P.},
  title      = {Group Theory and Its Application to the Quantum Mechanics of Atomic Spectra},
  publisher  = {Academic Press},
  address    = {New York},
  year       = {1959},
  pages      = {ix+372},
  translator = {Griffin, J. J.}
}

@article{BhowalSpaldin2021,
  author  = {Bhowal, Sayantika and Spaldin, Nicola A.},
  title   = {Revealing hidden magnetoelectric multipoles using {Compton} scattering},
  journal = {Phys. Rev. Res.},
  volume  = {3},
  number  = {3},
  pages   = {033185},
  year    = {2021},
  month   = aug,
  doi     = {10.1103/physrevresearch.3.033185}
}

@article{Sakano2020,
  author  = {Sakano, M. and Hirayama, M. and Takahashi, T. and Akebi, S. and Nakayama, M. and Kuroda, K. and Taguchi, K. and Yoshikawa, T. and Miyamoto, K. and Okuda, T. and Ono, K. and Kumigashira, H. and Ideue, T. and Iwasa, Y. and Mitsuishi, N. and Ishizaka, K. and Shin, S. and Miyake, T. and Murakami, S. and Sasagawa, T. and Kondo, Takeshi},
  title   = {Radial Spin Texture in Elemental Tellurium with Chiral Crystal Structure},
  journal = {Phys. Rev. Lett.},
  volume  = {124},
  number  = {13},
  pages   = {136404},
  year    = {2020},
  month   = mar,
  doi     = {10.1103/physrevlett.124.136404}
}

@article{Gatti2020,
  author  = {Gatti, G. and Gos\'albez-Mart\'{\i}nez, D. and Tsirkin, S. S. and Fanciulli, M. and Puppin, M. and Polishchuk, S. and Moser, S. and Testa, L. and Martino, E. and Roth, S. and Bugnon, Ph. and Moreschini, L. and Bostwick, A. and Jozwiak, C. and Rotenberg, E. and Di Santo, G. and Petaccia, L. and Vobornik, I. and Fujii, J. and Wong, J. and Jariwala, D. and Atwater, H. A. and R\o{}nnow, H. M. and Chergui, M. and Yazyev, O. V. and Grioni, M. and Crepaldi, A.},
  title   = {Radial Spin Texture of the {Weyl} Fermions in Chiral Tellurium},
  journal = {Phys. Rev. Lett.},
  volume  = {125},
  number  = {21},
  pages   = {216402},
  year    = {2020},
  month   = nov,
  doi     = {10.1103/PhysRevLett.125.216402}
}

@article{Cricchio2009,
  author  = {Cricchio, Francesco and Bultmark, Fredrik and Gr\aa{}n\"as, Oscar and Nordstr\"om, Lars},
  title   = {Itinerant Magnetic Multipole Moments of Rank Five as the Hidden Order in {${\mathrm{URu}}_{2}{\mathrm{Si}}_{2}$}},
  journal = {Phys. Rev. Lett.},
  volume  = {103},
  number  = {10},
  pages   = {107202},
  year    = {2009},
  month   = sep,
  doi     = {10.1103/PhysRevLett.103.107202}
}

@article{Spaldin2013,
  author  = {Spaldin, Nicola A. and Fechner, Michael and Bousquet, Eric and Balatsky, Alexander and Nordstr\"om, Lars},
  title   = {Monopole-based formalism for the diagonal magnetoelectric response},
  journal = {Phys. Rev. B},
  volume  = {88},
  number  = {9},
  pages   = {094429},
  year    = {2013},
  month   = sep,
  doi     = {10.1103/PhysRevB.88.094429}
}

@article{TholeFechnerSpaldin2016,
  author  = {Fechner, M. and Spaldin, N. A.},
  title   = {Effects of intense optical phonon pumping on the structure and electronic properties of yttrium barium copper oxide},
  journal = {Phys. Rev. B},
  volume  = {94},
  number  = {13},
  pages   = {134307},
  year    = {2016},
  month   = oct,
  doi     = {10.1103/PhysRevB.94.134307}
}

@article{Thole2018,
  author  = {Th{\"o}le, Florian and Spaldin, Nicola A.},
  title   = {Magnetoelectric multipoles in metals},
  journal = {Philos. Trans. R. Soc. A},
  volume  = {376},
  pages   = {20170450},
  year    = {2018},
  doi     = {10.1098/rsta.2017.0450}
}

@article{UrruSpaldin2022,
  author  = {Urru, Andrea and Spaldin, Nicola A.},
  title   = {Magnetic octupole tensor decomposition and second-order magnetoelectric effect},
  journal = {Ann. Phys.},
  volume  = {447},
  pages   = {168964},
  year    = {2022},
  month   = dec,
  doi     = {10.1016/j.aop.2022.168964}
}

@article{Urru2023,
  author  = {Urru, Andrea and Soh, Jian-Rui and Qureshi, Navid and Stunault, Anne and Roessli, Bertrand and R{\o}nnow, Henrik M. and Spaldin, Nicola A.},
  title   = {Neutron scattering from local magnetoelectric multipoles: A combined theoretical, computational, and experimental perspective},
  journal = {Phys. Rev. Res.},
  volume  = {5},
  number  = {3},
  pages   = {033147},
  year    = {2023},
  month   = sep,
  doi     = {10.1103/PhysRevResearch.5.033147}
}

@article{VerbeekUrruSpaldin2023,
  author  = {Verbeek, Xanthe H. and Urru, Andrea and Spaldin, Nicola A.},
  title   = {Hidden orders and (anti-)magnetoelectric effects in {Cr$_2$O$_3$} and {$\alpha$-Fe$_2$O$_3$}},
  journal = {Phys. Rev. Res.},
  volume  = {5},
  number  = {4},
  pages   = {l042018},
  year    = {2023},
  month   = nov,
  doi     = {10.1103/physrevresearch.5.l042018}
}

@article{Spaldin2026,
  author  = {Nicola A. Spaldin},
  title   = {Toward a modern theory of chiralization},
  journal = {Newton},
  pages   = {100581},
  year    = {2026},
  doi     = {10.1016/j.newton.2026.100581}
}

@article{Tsirkin2018,
  author  = {Tsirkin, Stepan S. and Puente, Pablo Aguado and Souza, Ivo},
  title   = {Gyrotropic effects in trigonal tellurium studied from first principles},
  journal = {Phys. Rev. B},
  volume  = {97},
  number  = {3},
  pages   = {035158},
  year    = {2018},
  month   = jan,
  doi     = {10.1103/physrevb.97.035158}
}

@misc{shapiro2016topological,
  author  = {Shapiro, Jacob},
  title   = {Notes on Topological Aspects of Condensed Matter Physics},
  year    = {2016},
  url     = {https://web.math.princeton.edu/~js129/PDFs/Top\_SSP\_Lecture\_Notes.pdf},
  urldate = {2026-06-09},
  note    = {Based on notes by Prof. G. M. Graf}
}

@article{Kresse:1996,
  author  = {Kresse, G. and Furthm\"uller, J.},
  title   = {Efficient iterative schemes for ab initio total-energy calculations using a plane-wave basis set},
  journal = {Phys. Rev. B},
  volume  = {54},
  number  = {16},
  pages   = {11169--11186},
  year    = {1996},
  month   = oct,
  doi     = {10.1103/PhysRevB.54.11169}
}

@article{PBE:1996,
  author  = {Perdew, John P. and Burke, Kieron and Ernzerhof, Matthias},
  title   = {Generalized Gradient Approximation Made Simple},
  journal = {Phys. Rev. Lett.},
  volume  = {77},
  number  = {18},
  pages   = {3865--3868},
  year    = {1996},
  month   = oct,
  doi     = {10.1103/PhysRevLett.77.3865}
}

@article{IVDW21:2013,
  author  = {Bu{\v{c}}ko, Tom{\'a}{\v{s}}
and Leb{\`e}gue, S{\'e}bastien
and Hafner, J{\"u}rgen
and {\'A}ngy{\'a}n, J{\'a}nos G.},
  title   = {Improved Density Dependent Correction for the Description of {London} Dispersion Forces},
  journal = {J. Chem. Theory Comput.},
  volume  = {9},
  number  = {10},
  pages   = {4293--4299},
  year    = {2013},
  month   = oct,
  doi     = {10.1021/ct400694h}
}

@article{IVDW21:2014,
  author  = {Bučko, Tomáš and Lebègue, Sébastien and Ángyán, János G. and Hafner, Jürgen},
  title   = {Extending the applicability of the {Tkatchenko--Scheffler} dispersion correction via iterative {Hirshfeld} partitioning},
  journal = {J. Chem. Phys.},
  volume  = {141},
  number  = {3},
  pages   = {034114},
  year    = {2014},
  month   = {07},
  doi     = {10.1063/1.4890003}
}

@article{VESTA_Momma:2008,
  author  = {Momma, Koichi and Izumi, Fujio},
  title   = {{VESTA}: a three-dimensional visualization system for electronic and structural analysis},
  journal = {J. Appl. Crystallogr.},
  volume  = {41},
  number  = {3},
  pages   = {653--658},
  year    = {2008},
  doi     = {10.1107/S0021889808012016}
}

@article{Oiwa_PRL:2022,
  author  = {Oiwa, Rikuto and Kusunose, Hiroaki},
  title   = {Rotation, Electric-Field Responses, and Absolute Enantioselection in Chiral Crystals},
  journal = {Phys. Rev. Lett.},
  volume  = {129},
  number  = {11},
  pages   = {116401},
  year    = {2022},
  month   = sep,
  doi     = {10.1103/PhysRevLett.129.116401}
}

@article{Kusunose_chirality_APL_2024,
  author  = {Kusunose, Hiroaki and Kishine, Jun-ichiro and Yamamoto, Hiroshi M.},
  title   = {Emergence of chirality from electron spins, physical fields, and material-field composites},
  journal = {Appl. Phys. Lett.},
  volume  = {124},
  number  = {26},
  pages   = {260501},
  year    = {2024},
  month   = {06},
  doi     = {10.1063/5.0214919}
}

@article{Choi1999,
  author  = {Choi, Cheol Ho and Ivanic, Joseph and Gordon, Mark S. and Ruedenberg, Klaus},
  title   = {Rapid and stable determination of rotation matrices between spherical harmonics by direct recursion},
  journal = {J. Chem. Phys.},
  volume  = {111},
  number  = {19},
  pages   = {8825--8831},
  year    = {1999},
  month   = {11},
  doi     = {10.1063/1.480229}
}

@article{Winkler,
  author  = {Winkler, R. and Z\"ulicke, U.},
  title   = {Theory of electric, magnetic, and toroidal polarizations in crystalline solids with applications to hexagonal lonsdaleite and cubic diamond},
  journal = {Phys. Rev. B},
  volume  = {107},
  number  = {15},
  pages   = {155201},
  year    = {2023},
  month   = apr,
  doi     = {10.1103/PhysRevB.107.155201}
}

@Article{Keller_Te:1977,
author={Keller, R.
and Holzapfel, W. B.
and Schulz, Heinz},
title={Effect of pressure on the atom positions in Se and Te},
journal={Phys. Rev. B},
year={1977},
month={Nov},
day={15},
publisher={American Physical Society},
volume={16},
number={10},
pages={4404-4412},
doi={10.1103/PhysRevB.16.4404},
}

\end{document}